\newif\ifxetexorluatex
\numberwithin{equation}{section}
\crefname{equation}{}{}
\pgfplotsset{compat=1.16}
\definecolor{col1}{HTML}{d7191c}
\definecolor{col2}{HTML}{fdae61}
\definecolor{col3}{HTML}{abd9e9} 
\definecolor{col4}{HTML}{2c7bb6}
\renewcommand{\C}{\mathbf{C}}
\newcommand{\C}{\mathbf{C}}
\newcommand{\wh}{\widehat}
\DeclareMathOperator{\Tr}{Tr}
\DeclareMathOperator{\Cov}{\mathbf{Cov}}
\DeclareMathOperator{\Std}{\mathbf{Std}}
\DeclareMathOperator{\SFF}{\mathbf{SFF}}
\DeclareMathOperator{\Var}{\mathbf{Var}}
\newcommand{\ov}{\overline}
\newtheorem{theorem}{Theorem}[section]
\newtheorem{lemma}[theorem]{Lemma}
\newtheorem{proposition}[theorem]{Proposition}
\newtheorem{definition}[theorem]{Definition}
\newtheorem{remark}[theorem]{Remark}
\newtheorem{corollary}[theorem]{Corollary}
\newcommand{\ii}{\mathrm{i}}
\newcommand{\norm}[1]{\lVert #1\rVert}
\newcommand{\un}{\underline}
\newcommand{\wt}{\widetilde}
\newcommand{\cE}{\mathcal E}
\newcommand{\cO}{\mathcal O}
\newcommand{\E}{\mathbf E}
\newcommand{\Prob}{\mathbf P}
\newcommand{\R}{\mathbf R}
\newcommand{\N}{\mathbf N}
\newcommand{\dif}{\ensuremath{\operatorname{d}\!{}}}
\newcommand{\be}{\begin{equation}}
\newcommand{\ee}{\end{equation}}
\DeclarePairedDelimiter{\abs}{\lvert}{\rvert}%
\DeclarePairedDelimiter{\braket}{\langle}{\rangle}
\DeclarePairedDelimiterXPP{\landauO}[1]{\cO}(){}{#1}
\DeclarePairedDelimiterXPP{\landauo}[1]{\co}(){}{#1}
\DeclarePairedDelimiterX{\set}[1]\{\}{#1}
\author{Giorgio Cipolloni}
\address{Princeton Center for Theoretical Science, Princeton University, Princeton, NJ 08544, USA}
\author{L\'aszl\'o Erd\H{o}s\(^{\dagger}\)}
\address{IST Austria, Am Campus 1, 3400 Klosterneuburg, Austria}
\author{Dominik Schr\"oder\(^{\ddagger}\)}
\address{Institute for Theoretical Studies, ETH Zurich, Clausiusstr.\ 47, 8092 Zurich, Switzerland}
\email{gc4233@princeton.edu} 
\email{lerdos@ist.ac.at}
\email{dschroeder@ethz.ch}
\thanks{$^\dagger$Partially supported by ERC Advanced Grant "RMTBeyond" No.~101020331}
\thanks{\(^{\ddagger}\)Supported by Dr.\ Max R\"ossler, the Walter Haefner Foundation and the ETH Z\"urich Foundation}
\subjclass[2010]{60B20} 
\keywords{Wigner-Dyson universality, GUE ensemble, Slope-dip-ramp-plateau}
\title{On the spectral form factor for random matrices}  
\date{\today}
\begin{document}

\begin{abstract}  
In the physics literature the spectral form factor (SFF), the squared Fourier transform of the empirical eigenvalue density, 
is the most common tool to test universality for disordered quantum systems, yet  previous
mathematical results have been restricted  only to two exactly
solvable models~\cite{MR4257834, MR4312363}. We rigorously prove the physics prediction on SFF
up to an intermediate time scale for 
a large class of random matrices using a robust method, the  multi-resolvent local laws. 
Beyond Wigner matrices we also consider the monoparametric ensemble and prove that
universality of SFF can already be triggered by a single random parameter,
supplementing the recently proven  Wigner-Dyson universality~\cite{2106.10200} to  larger spectral scales.
Remarkably, extensive numerics indicates that 
our formulas  correctly predict the SFF in the entire \emph{slope-dip-ramp} regime, as customarily 
called in physics.  

\end{abstract}

\maketitle

\section{Introduction}

Spectral statistics of disordered quantum systems tend to exhibit universal behavior  and
hence are widely used to  study quantum chaos and to 
identify universality classes. In the chaotic regime, the celebrated Wigner-Dyson-Mehta
eigenvalue gap statistics involving the well-known \emph{sine-kernel}~\cite{MR2129906} tests 
this universality  on the scale of 
individual eigenvalue spacing. On this small \emph{microscopic} scale the universality phenomenon is the most robust
and it depends only on the fundamental symmetry type of the model.
On larger scales more details of the model influence the spectral statistics, nevertheless
several qualitative and also quantitative  universal patterns still prevail.

\subsection{The spectral form factor and predictions from physics}

In the physics literature the standard tool to investigate eigenvalues $\lambda_1, \lambda_2, \ldots, \lambda_N$
of a Hermitian $N\times N$ matrix (Hamiltonian) $H$ on all scales at once is the \emph{spectral form factor}
(SFF) \cite{10032995}  defined as
\begin{equation}\label{def:sff1}
\SFF(t): = \frac{1}{N^2} \sum_{i,j=1}^N e^{\ii t(\lambda_i-\lambda_j)}=\abs{\braket{e^{\ii t H}}}^2
\end{equation}
with a real time parameter $t>0$, i.e.\ it is the square of the Fourier transform of the empirical spectral density.
Here we denoted the normalized trace of any $N\times N$ matrix $A$
by $\braket{A}= \frac{1}{N}\Tr A$.
In case of random $H$, the expectation of $\SFF(t)$ is denoted by
\begin{equation}\label{def:K}
    K(t) :=\E \big[ \SFF(t)\big].
\end{equation}
For typical disordered Hamiltonians
a key feature of $\SFF(t)$ is that for larger $t$ (more precisely, in the \emph{ramp} and \emph{plateau} regimes,
see later) it is strongly dependent on the sample, i.e.\ the standard deviation of $\SFF(t)$ is 
comparable with $K(t)$. In other words, $\SFF(t)$
 is  \emph{not} self-averaging~\cite{PhysRevLett.78.2280}
despite the large summation in~\eqref{def:sff1}.  

The spectral form factor and its expectation $K(t)$ have
 a very rich physics literature since they contain most physically relevant information about
spectral statistics. Quantizations of integrable systems typically result in $K(t)\sim 1/N$ for 
all $t$ where $N$ is the dimension of the Hilbert space.
Chaotic systems give rise to a linearly growing behavior of $K(t)$ for smaller  $t$  (so-called \emph{ramp}) until
it turns into  a flat regime, the \emph{plateau}. The turning point is around the Heisenberg time $T_H$, but the details
of the transition depend on the symmetry class of $H$ and on whether the eigenvalues are rescaled 
to take into account the non-constant density of states (in physics terminology: \emph{unfolding the spectrum}). 
For example,  in the time irreversible case (GUE symmetry class) the unfolded SFF
has a sharp kink, while in the GOE symmetry  class the kink is smoothened. The exact formulas can be 
computed  from the Fourier transform of the two point eigenvalue correlation function 
of the corresponding Gaussian random matrix ensemble, see~\cite[Eqs. (6.2.17), (7.2.46)]{MR2129906},  the result is
  \begin{equation}\label{KGUE}
    K_\mathrm{GUE}(\tau T_H) \approx \frac{1}{N}\times \begin{cases} \tau, &  0< \tau\le 1 \\ 1, &  \tau\ge 1 \end{cases},
    \qquad K_\mathrm{GOE}(\tau T_H) \approx \frac{1}{N}\times \begin{cases} 2\tau- \tau\log(1+2\tau), &  0< \tau\le 1 \\ 
    2-  \tau\log \frac{2\tau+1}{2\tau-1},&  \tau\ge 1 \end{cases},
\end{equation}
for any fixed $\tau>0$ in the large $N$ limit.
Here we expressed the physical time $t$ in units of the Heisenberg time, $\tau= t/T_H$, 
where $T_H$ is  given by $T_H = 2\pi\bar \rho$ with $\bar\rho$ being the
average density.  Choosing the standard normalisation  for the independent (up to symmetry) matrix elements,
\begin{equation}\label{Hnorm}
 \E h_{ij}=0, \qquad \E |h_{ij}|^2 =\frac{1}{N},
\end{equation}
 the limiting density of states is 
the semicircle law $\rho_{\mathrm{sc}}(E)=\frac{1}{2\pi}\sqrt{(4-E^2)_+}$, so  we have $N$ eigenvalues
in an interval of size 4, hence $\bar\rho = N/4$ and thus $T_H= \frac{\pi}{2}N$.  In particular, 
in the original $t$ variable 
\begin{equation}\label{KGUEt}
 K_\mathrm{GUE}(t) \approx \begin{cases} \frac{2t}{\pi N^2} , &  \delta N\le t\le  \frac{\pi}{2}N  \\ \frac{1}{N}, 
 &  t\ge  \frac{\pi}{2}N.  \end{cases}
\end{equation}
Note the lower bound on $t$: the formula holds in the large $N$ limit in the regime where 
$t\ge \delta N$  for some fixed $\delta>0$ that is independent of $N$. 
The corresponding formulas without unfolding the spectrum (i.e.\ for the quantity defined in~\eqref{def:sff1}) 
are somewhat different, see e.g. \cite[Eq. (4.8)]{cond-mat/9608116} for the GUE case;  they still 
have a ramp-plateau shape but the kink is smoothened.

The ramp-plateau picture and its sensitivity to the symmetry type
has  been established well beyond the standard mean field   random matrix models.
In fact, the  Bohigas-Giannoni-Schmit conjecture \cite{MR730191}   asserts that the formulas
\eqref{KGUE} are universal, i.e.
they hold  essentially for any chaotic quantum system,  depending  only
on whether the system is without or with time reversal symmetry. 
The nonrigorous  but remarkably  effective semiclassical orbit theory~\cite{Sieber_2001, 
PhysRevLett.93.014103,PhysRevLett.98.044103,MR805089} based upon Gutzwiller's trace formula~\cite{MR1077246} and many follow-up works 
verified  this conjecture   for quantizations of a large family  of classical chaotic systems, e.g.
for certain billiards.

\begin{figure}
  \centering
  \begin{tikzpicture}
    \begin{axis}[width=15cm,height=9cm,xmode=log,ymode=log,no markers,xtick={1,22.36,pi*250},
      xticklabels={$1$,$\sqrt{N}$,$\pi N/2$}]
       \addplot[col1,very thick] table[col sep=comma,x index=0,y index=1] {GUESFF.csv};
       \addplot[col4,very thick] table[col sep=comma,x index=0,y index=2] {GUESFF.csv};
       \addplot[col2,very thick] table[col sep=comma,x index=0,y index=3] {GUESFF.csv};
       \legend{$\abs{\braket{e^{\ii t H}}}^2$,$\E\abs{\braket{e^{\ii t H}}}^2$, $\Std\abs{\braket{e^{\ii t H}}}^2$,$\sqrt{N}$,$N$}
    \end{axis}
 \end{tikzpicture}
  \caption{A typical \emph{slope-dip-ramp-plateau} picture for the spectral form factor  of a chaotic system.
  The figure on log-log scale shows the SFF of a single GUE realisation \(H\) of size \(500\times 500\), as well as the empirical mean and standard deviation obtained from \(500\) independent realisations. }\label{fig sff}
\end{figure}

 For smaller times, $t\ll T_H$,  other details of $H$ may become
relevant. In particular the drop   from $K(t=0)=1$ to  $K(t)\ll 1$ for $1\ll t\ll T_H$  
is first dominated by the typical  non-analyticity of the density of states at the spectral edges
giving rise to the \emph{slope regime} up to 
an intermediate minimum point of $K(t)$, called the \emph{dip} (in the early literature the
dip was called \emph{correlation hole}~\cite{10032995},
for a recent overview, see~\cite{1706.05400}). 

\Cref{fig sff} shows the typical \emph{slope-dip-ramp-plateau} picture for the GUE ensemble.
Formula~\eqref{KGUEt} is valid starting from scales   $t\gg N^{1/2}$, while $K(t)$  is
oscillatorily decreasing for $t\lesssim N^{1/2}$  with a dip-time $t_{\mathrm{dip}}\sim N^{1/2}$.
Thus $K(t)$ follows
 the universal behavior~\eqref{KGUEt} only for $t\gg  t_{\mathrm{dip}}$. In this regime the fluctuation
 of the SFF is comparable with its expectation, $K(t)$, in fact  $\braket{e^{itH}}$ is approximately Gaussian.
 In contrast,  the dominant contribution to the slope regime, $t\ll t_{\mathrm{dip}}$,  is self-averaging
 with a relatively  negligible fluctuation.  However, if the edge effects are properly discounted (e.g. by considering
 the circular ensemble with uniform spectral density on the unit circle), i.e.\ 
  the slope regime is entirely removed, then the Gaussian behavior holds 
  for all $t\ll T_H$ with a universal variance given by~\eqref{KGUEt}.

In more recent works spectral form factors were studied for the celebrated Sachdev-Ye-Kitaev (SYK)
 model \cite{1806.06840,1611.04650, MR4136970, PhysRevD.96.066012, PhysRevD.97.106003} which also exhibits a similar \emph{slope-dip-ramp-plateau} pattern 
 although the details are still debated in the physics literature and the numerics are much less reliable 
 due to the exponentially  large dimensionality of the model.

\subsection{Our results}

Quite surprisingly, despite its central role in the physics literature on quantum chaos,
SFF has not been rigorously  investigated in the mathematics literature up to very recently,
when Forrester computed the large $N$ limit of $K(t)$ rigorously for the GUE in~\cite{MR4257834} and 
the Laguerre Unitary Ensemble (LUE) in~\cite{MR4312363} in the entire regime $t\ll N$. Both results
rely on a remarkable identity from~\cite{cond-mat/9608116} (and its extension to the LUE case)
and on previous 
stimulating work of Okuyama~\cite{MR3925257}. %
However, these methods use exact identities and thus are restricted to a few
explicitly solvable invariant ensembles.  

The main goal of the current paper is to investigate SFF beyond these special cases
with a robust method, the multi-resolvent local laws. 
While our approach is valid for quite
general ensembles, for definiteness we focus on two models: the standard Wigner ensemble
(for both symmetry classes) and the novel \emph{monoparametric ensemble} introduced 
recently~\cite{GPSW} by Gharibyan, Pattison, Shenker and Wells.
 The latter consists of matrices of the form $H^s:= s_1 H_1+s_2H_2$,
where $H_1$  and $H_2$ are typical but \emph{fixed} realisations of two independent Wigner  matrices
and $s=(s_1, s_2)\in S^1\subset \R$ is a continuous  random variable.
The normalization $s_1^2+s_2^2=1$ guarantees that the semicircle law for $H^s$ is independent of $s$
and it also shows that the model has effectively only one random parameter.
One may also consider similar ensembles with finitely  many  parameters (see Remark~\ref{rmk:m})
resulting in qualitatively the same behavior but with different power laws, see \Cref{table exps}. 

 We  study the statistics of $H^s$ in the
probability space of the single random variable $s$ and probe how much  universality still persists with 
such reduced randomness.
We write $\E_s$ for the expectation wrt. $s$ and $\E_H$, $\Std_H$ for the expectation
and standard deviation wrt. $H_1$ and $H_2$.

 Our  main result is to prove a formula for the expectation and standard deviation of SFF for both  ensembles
up to an intermediate time.  While this does
not include the  ramp regime, it already allows us to draw the following two main conclusions
of the paper:
\begin{enumerate}[label=(\alph*)]
\item The expectation and standard deviation of $\SFF(t)$ for Wigner   and monoparametric
ensembles exhibit the same universal  behavior to leading order for $1\ll t\ll N^{1/4}$ if the trivial edge effects
are removed. In the monoparametric case it is quite
remarkable that already a single real random variable generates universality. 
\item For the monoparametric ensemble  $K(t)=\E_s [\SFF(t)]$ depends non-trivially on the fixed $H_1, H_2$ matrices, 
but for large $t$ this dependence is a subleading effect whose relative size  becomes
increasingly negligible as a negative power of $t$. In particular, while the speed of convergence to universality is much slower
for the monoparametric ensemble than for the Wigner case,  it is improving for larger $t$.
 \end{enumerate}
The second item  answers a question raised by the authors of~\cite{GPSW} 
 which strongly motivated
 the current work. In particular, sampling from $s$ does not give a consistent estimator 
 for  $K(t)$, but the relative precision of such estimate improves for larger times.

We supplement these proofs with an extensive numerics demonstrating that both
conclusions hold  not only for $t\ll N^{1/4}$ but for the entire ramp regime, i.e.\ up to $t\ll T_H\sim N$.
Note that recently we have proved~\cite{2106.10200}
that the Wigner-Dyson-Mehta eigenvalue gap universality holds for the monoparametric ensemble,
which strongly supports, albeit does not prove,  that  $K(t)$ in the  plateau regime is also universal.

We remark that our method applies without difficulty  for finite temperatures (expressed by 
 a parameter $\beta>0$)  and for different-time autocorrelation functions, 
 i.e.\ for 
 \[
 \braket{e^{(-\beta+\ii t) H}}\braket{  e^{(-\beta-\ii t')H}}
 \]
 as well, but for the simplicity of the presentation we focus on $\SFF(t)$ defined in~\eqref{def:sff1},
 i.e.\ on $\beta=0$ and $t=t'$.

\subsection{Relations to previous mathematical results}

 Rigorous mathematics for the spectral form factor, even for Wigner matrices or even for GOE,
 significantly lags behind  establishing the compelling physics picture about the slope-dip-ramp-plateau.
 Given the recently developed tools in random matrix theory, it may appear 
 surprising that they do not directly answer  the important  questions on SFF. We now briefly explain why.

\subsubsection{Limitations of  the resolvent methods}

For problems on macroscopic spectral scales
 (involving the cumulative effect of  order $N$ many eigenvalues), and to a large extent also on
 mesoscopic scales (involving many more than $O(1)$ eigenvalues), the \emph{resolvent method} is suitable. This method
 considers the resolvent $G(z)= (H-z)^{-1}$ of $H$ for a spectral parameter $z$
 away from (but typically still close to)  the real axis and establishes that in a
 certain sense $G(z)$ becomes deterministic. This works for $\eta=\Im z\gg N^{-1}$ (in the bulk spectrum),
 i.e.\ on scales just above the eigenvalue spacing (note that the imaginary part of the 
 spectral parameter sets a scale in the spectrum). Such results are called \emph{local laws} and they can 
 be extended to regular functions $f(H)$ by standard spectral calculus (Helffer-Sj\"ostrand formula, see~\eqref{eq:HS} later).

 However, the interesting questions about   SFF  concern a $1/N$ subleading fluctuation
 effect beyond the local laws.   Indeed
 $$
   \Tr e^{\ii tH} = \sum_i e^{\ii t \lambda_i}
 $$
 is a special case of the well-studied \emph{linear eigenvalue statistics},  $\Tr f(H)=\sum_i f(\lambda_i)$,
 with the regular test function $f(\lambda)= e^{\ii t\lambda}$. 
To leading order it is deterministic  and
 its fluctuation satisfies the central limit theorem (CLT) without the customary $\sqrt{N}$  normalisation, i.e.
  \begin{equation}\label{linstat}
     \sum_i f(\lambda_i) - \E  \sum_i f(\lambda_i)\approx \mathcal{N}\big(0, V_f), 
    \qquad \mbox{with}\quad \E  \sum_i f(\lambda_i) = N \int_\R f(x) \rho_\mathrm{sc}(x)\dif x + O_f(1).
 \end{equation}
 is a normal random variable with variance\footnote{Eq.~\cref{Vf} is for matrices whose second and fourth moments coincide with the ones of GUE, otherwise there are additional terms, see e.g.~\cite[Theorem 2.4]{2012.13218}.}
 \begin{equation}\label{Vf}
 V_f = \frac{1}{4\pi^2}\iint_{-2}^2 \abs*{\frac{f(x)-f(y)}{x-y}}^2\frac{4-xy}{\sqrt{4-x^2}\sqrt{4-y^2}}\dif x \dif y .
\end{equation} 
The computation of higher moments of $\Tr  f(H) - \E \Tr  f(H)$ requires a generalization of 
the local laws to polynomial combinations of several $G$'s that are called \emph{multi-resolvent local laws}.

 Applying \eqref{linstat}--\eqref{Vf} to $f(x)= e^{\ii tx}$ we obtain, roughly, 
  \begin{equation}\label{Ot}
  \SFF(t) = \frac{1}{N^2}\big| \Tr e^{\ii t H}\big|^2
   \approx \Big[\frac{J_1(2t)}{t}  + O\Big( \sqrt{\frac{V_f}{N^2} } \Big)  \Big]^2, \qquad t\gg 1,  %
  \end{equation}
  using that 
 $$
\int_\R f(x) \rho_\mathrm{sc}(x)\dif x = \int_\R e^{\ii t x} \rho_\mathrm{sc}(x)\dif x = \frac{J_1(2t)}{t},
$$
where $J_1$ is the first Bessel function of the first kind.  Note that $V_f$ in~\eqref{Vf} scales essentially as the $H^{1/2}$ Sobolev norm of $f$
 hence $V_f\sim t$ for our $f(x)= e^{\ii tx}$ in the regime $t\gg1$.
Therefore the size of the  fluctuation term  in~\eqref{Ot} is $V_f/N^2\sim t/N^2$ and it competes
with the deterministic term $(J_1/t)^2\sim t^{-3}$. The dip time $t_{\mathrm{dip}}\sim\sqrt{N}$ is obtained
as the threshold where the fluctuation (the linear ramp function) becomes bigger than the slope function $(J_1/t)^2$.
This argument, however, is heuristic as it neglects the error terms in~\eqref{linstat} that also depend on $t$ via $f$.

CLT for linear statistics~\eqref{linstat} for Wigner matrices $H$
 has been proven~\cite{MR1487983, MR1411619, MR2189081, MR2829615, MR3116567,MR1899457,MR4095015, MR3678478,   2012.13218, 2105.01178, MR2489497,MR4187127,2103.05402, MR3959983} %
   for  test functions of the form 
$ f(x)= g(N^a(x-E))$
with some fixed reference point $|E|<2$, scaling exponent $a\in [0, 1)$ and  smooth function $g$
with compact support, i.e for macroscopic ($a=0$) and mesoscopic ($0<a<1$) test functions living on a \emph{single} scale $N^{-a}$.
These  proofs give optimal error terms for such functions but they
were not optimized
for dealing with  functions that oscillate on a mesoscopic scale \emph{and} have macroscopic support, 
like $f(x)= e^{\ii t x}$ for some $t\sim N^\alpha$, $\alpha>0$.
The only CLT-type result for a  special two-scale observable is in~\cite{MR4522353} where  the 
eigenvalue counting function smoothed on an intermediate scale $N^{-1/3}$  was considered.

Quite remarkably, extensive numerics shows that the formulas \eqref{linstat}--\eqref{Vf} for 
$f(x)= e^{\ii t x}$ are in perfect agreement with the expected behavior of 
 $K(t)$ in the entire slope-dip-ramp regime 
all the way up to $t\ll N$, i.e.\ the CLT for linear statistics 
correctly predicts SFF well beyond its proven regime of validity.
 In the current paper we optimise the error terms specifically for $e^{\ii tx}$
and thus we could
  cover the regime $t\ll N^{5/11}$ for the variance  in~\eqref{linstat} (corresponding to $\E [ \SFF(t)]$).

\subsubsection{Limitations of Dyson Brownian motion techniques}
For the microscopic scale (i.e.\ comparable with the eigenvalue spacing, $1/N$ in the bulk)
the resolvent is heavily fluctuating as it strongly depends on single eigenvalues. Local laws
cannot access them,  but in this regime another approach, the 
careful analysis of the \emph{Dyson Brownian Motion (DBM)} becomes applicable.
While these two approaches are  complementary and apparently  cover 
all scales, the actual methods require  additional conditions that seriously  restrict their use
for SFF. 

The 
formulas~\eqref{KGUE} are obtained
by  computing
the Fourier transform of  the  two point correlation function of the rescaled \emph{(unfolded)} eigenvalues. Indeed, in the GUE case $K_\mathrm{GUE}(t)$ in~\eqref{KGUE} is just the Fourier transform of $p_2(x,y) - 1+\delta(x-y)$
 in the difference variable $x-y$, where 
\[ 
p_2(x,y) := 1- \Big(\frac{\sin(\pi(x-y))}{\pi(x-y)} \Big)^2,
\]
is  the two point function, given by the celebrated Wigner-Dyson sine kernel,
and $K_\mathrm{GOE}(t)$ has a similar origin. 
Wigner-Dyson theory is designed for microscopic scales, i.e.\ to describe
eigenvalue correlations on scales comparable with the local level spacing $\Delta$,
this is encoded in the fact that~\eqref{KGUE} holds for any fixed $\tau>0$ in the $N\to \infty$ limit
(equivalently that~\eqref{KGUEt} holds only for $t\ge \delta N$ since $\Delta\sim 1/N$ in the bulk).
While this is a very elegant argument supporting~\eqref{KGUE},
mathematically it is quite far from a rigorous proof.

The mathematical proofs of the sine-kernel universality use test functions that are
rapidly decaying beyond scale $\Delta$. The typical statements (so called \emph{fixed energy universality}~\cite{MR3541852, 1609.09011}) show that for any fixed energy $E$ in the bulk 
\[
    \sum_{i<j} g\big( N\rho_\mathrm{sc}(E)(\lambda_i-E), N\rho_\mathrm{sc} (E)(\lambda_j-E) \big) \to \iint_\R
    g(x,y) p_2(x,y)\dif x\dif y
\]
in the large $N$ limit, for any smooth, compactly supported functions $g\colon\R^2\to \R$.
The current methods for proving the Wigner-Dyson universality 
 cannot deal with functions that are macroscopically supported, like
$g(x,y)= e^{\ii t(x-y)}$ with a fast oscillation $t\sim N$. 

\subsection{Summary}  
Using multi-resolvent local laws we prove a CLT for linear statistics of monoparametric ensembles (Theorem~\ref{CLT f(H)}) with covariance
\[
\Cov ( \Tr f(H^s) , \Tr g(H^r) ) \approx \frac{1}{\pi^2}\iint f'(x) g'(y) \log \abs*{\frac{ 1-\braket{s,r}m_\mathrm{sc}(x)\overline{m_\mathrm{sc}(y)}}{1- \braket{s,r}m_\mathrm{sc}(x)m_\mathrm{sc}(y)}} \dif x \dif y
\]
with an additional term depending on the fourth cumulant. Due to a careful analysis of the error terms this allows us to prove the expected behavior on the expectation and standard deviation
of the SFF for Wigner matrices for $t\ll N^{5/17}$  
(Theorem~\ref{sff wigner}) and for the monoparametric ensemble for $t\ll N^{1/4}$ (Theorem~\ref{sff mono}).  Beyond these regime 
the spectral form factor  is not understood mathematically apart from the special GUE and LUE cases.
However, we can still use our predictions from the CLT for linear statistics~\eqref{linstat} to derive an Ansatz for the behavior of $\SFF(t)$ in the entire $t\ll N$ regime. In particular, we show that the SFF is universal  for the  monoparametric ensemble. We find numerically that  our theory correctly reproduces $\SFF(t)$ for any $t\ll N$ and it also coincides with the physics predictions for the GUE case.

\subsection*{Notations and conventions}

For positive quantities $f$ and $g$ we will frequently use the notation  $f\approx  g$   meaning that $f/g \to 1$
in a limit that is always  clear from the context. Similarly, $f\ll g$ means that $f/g\to 0$. Finally, the relation $f\sim g$ means
 that there exist two positive constants $c, C$  such that $c\le f/g \le C$.

We say that an event holds "with very high probability" if for any fixed $D>0$ the probability of the event is bigger than $1-N^{-D}$ if $N\ge N_0(D)$, for some $N_0(D)>0$.

\subsection*{Acknowledgement} We are
grateful to the authors of~\cite{GPSW} for sharing with us their insights and preliminary numerical results. We are especially
thankful to Stephen Shenker for very valuable advice over several email communications.
Helpful comments  on the manuscript from Peter Forrester and from the anonymous referees are also acknowledged. 

\section{Statement of the main results}
Our new results mainly concern the monoparametric ensemble but for comparison reasons we also 
prove the analogous results for the Wigner ensemble. We start  with the two corresponding  definitions.
\begin{definition}\label{def:wigner} The Wigner ensemble consists of Hermitian $N\times N$ 
random matrices $H$ with the following properties. The off-diagonal matrix elements below the diagonal are independent, identically distributed (i.i.d) real (\(\beta=1\)) or complex \((\beta=2)\) random variables; in the latter case we assume that \(\E h_{ij}^2=0\). The diagonal elements are i.i.d.\ real 
random variables with \(\E  h_{ii}^2 = 2/(N\beta)\). Besides the standard normalisation~\eqref{Hnorm}, we also make the customary moment assumption: for every  $q\in \N$ 
there is a constant $C_q$ such that
\begin{equation}\label{mom}
    \E \big| \sqrt{N} h_{ij}\big|^{q}\le C_q.
\end{equation}
In the case of Gaussian distributions, it  is called the Gaussian Orthogonal  or Unitary Ensemble (GOE/GUE), for the real and complex cases, respectively.
\end{definition}
\begin{remark}
  The assumptions \(\E h_{ij}^2=0\) in the complex case, and \(\E h_{ii}^2=2/(\beta N)\) are
  made purely for convenience. All results can easily be generalised beyond this case but we refrain from doing so for notational simplicity.
\end{remark}
\begin{definition}\label{def:mon} The monoparametric ensemble consists of Hermitian $N\times N$ random matrices of the form 
\begin{equation}\label{Hs def}
  H= H^s:=s_1H_1+s_2H_2,
 \end{equation}
where $H_1, H_2$ are independent Wigner matrices satisfying\footnote{We assume equal fourth cumulants merely for notational convenience. Our proof verbatim covers also the more general case.} \(\E\abs{ h_{ij}^{(1)}}^4=\E\abs{ h_{ij}^{(2)}}^4\) and $s=(s_1,s_2)\in S^1$ is a random vector, independent of $H_1, H_2$. On the distribution of $s$ we assume that it has an square integrable
 density $\rho(s)$ independent of \(N\). 
 We write $\E_s$ for the expectation wrt.\ $s$ and $\E_H$, $\Std_H$ for the expectation and standard deviation wrt. the Wigner matrices $H_1$ and $H_2$.
\end{definition}

The parameter space $S^1\subset\R^2$ inherits the usual scalar product and norms from $\R^2$, so for $s,r\in S^1$ we have
\[
\braket{s,r}:=s_1r_1+s_2r_2,\quad \norm{s}_p:=(\abs{s_1}^p+\abs{s_2}^p)^{1/p}.
\]
We also introduce the entrywise product of two vectors:
\[
s\odot r:=(s_1r_1,s_2r_2).
\]
For a fixed $s$, $H^s$ is just the weighted sum of two Wigner matrices, and, due to the normalisation, itself is just a Wigner matrix.
However, the concept of monoparametric ensemble  views $H^s$  as a random matrix
in the probability space of the single random variable $s$ for a typical but fixed (quenched) realization of $H_1$
and $H_2$. 
While Wigner matrices have a large $(\sim N^2)$ number of independent random degrees of 
  freedom, the monoparametric ensemble is generated by one single random variable
  hence, naively, much less universality properties are expected. Nevertheless, the standard
  Wigner-Dyson local eigenvalue universality holds~\cite{2106.10200}.

\begin{remark}\label{rmk:m}
    In~\cite{2106.10200} we considered the un-normalized monoparametric model \(H^s:=H_1+sH_2\), for some real valued random variable $s$, whose density of states is a rescaled semicircular distribution.
   In this paper we prefer to work with more homogeneous models since the formulas are somewhat nicer, but 
    our main results   
    also apply to this inhomogeous model  with some slightly different exponents in the error terms.
    One may also consider a different  un-normalized ensemble, \(s_1H_1+s_2H_2\) with \(s\in\R^2\)
    having an absolutely continuous distribution, which is effectively a two parameter model.
    Similar results also hold for the multi-parametric analogue of~\cref{Hs def}, i.e.\ \(s_1H_1+\cdots+s_kH_k\) for \(s\in S^{k-1}\), see Remark~\ref{rmk multi} and Section~\ref{sec:ext} later.  Despite all these options, for definiteness, the main body of this paper concerns
   the homogenous monoparametric model from Definition~\ref{def:mon}.
\end{remark}

\subsection{Central limit theorem for sum of Wigner matrices}
To understand the effect of the random $s$, we  study
the joint statistics of $H^s$ and $H^r$ 
for two different fixed realisations $r, s$ in the probability space of $H_1, H_2$,
i.e.\ we aim at the correlation effects between $H^s$ and $H^r$. We introduce the short-hand notations 
\begin{equation}
  \braket{f}_\mathrm{sc}:=\int_{-2}^2 f(x)\frac{\sqrt{4-x^2}}{2\pi}\dif x, \quad \braket{f}_\mathrm{1/sc}:=\int_{-2}^2 f(x)\frac{1}{\pi\sqrt{4-x^2}}\dif x, \quad \kappa_4 := N^2 \E\abs{h_{12}}^4 -1-\frac{2}{\beta}.
\end{equation}
To estimate the error term in the following theorem we introduce a parameter $1\le \tau\ll N$ and the weighted norm
  \begin{equation}
 \label{eq:taunorm}
  \norm{f}_\tau:=\tau^2\norm{f}_\infty+\tau\norm{f}_{H^1}+\norm{f}_{H^2},
  \end{equation}
  where \(\norm{f}_{H^k}^2 :=\sum_{j\le k}\int_\R\abs{f^{(j)}}^2\) is the usual Sobolev norm.
   For the applications later, the parameter \(\tau\) will be optimized.

\begin{theorem}\label{CLT f(H)}For \(s\in S^1\) and test functions \(f\in H^2(\R)\) the family of random variables \(\Tr f(H^s)\) is approximately Gaussian of mean
  \begin{equation}\label{exp fH}
    \E \Tr f(H^s) = N\braket{f}_\mathrm{sc} + \kappa_4\norm{s}_4^4 \braket*{\frac{x^4-4x^2+2}{2}f}_\mathrm{1/sc} + \bm1(\beta=1)\Bigl[\frac{f(2)+f(-2)}{4}  -\frac{\braket{f}_\mathrm{1/sc}}{2}\Bigr] + \landauO{\cE_1},
  \end{equation}
  and fluctuation
  \begin{equation}
  \label{eq:vsrwick}
    \E \prod_{i=1}^p \Bigl(\Tr f_i(H^{s^i})-\E \Tr f_i(H^{s^i})\Bigr) = \sum_{P\in\mathrm{Pair}([p])} \prod_{(i,j)\in P} v^{s^is^j}(f_i,f_j) + \mathcal O_p(\cE_p),
  \end{equation}
 for any fixed $p\in \mathbf{N}$, functions $f_1,\dots, f_p\in H^2(\R)$, and parameters $s^1,\dots, s^p\in S^1$, 
 where\footnote{For the applications in this paper, SFF in the regime $t\gg 1$,  
 the first term in~\eqref{vsrfg} is the only relevant one.}
  \begin{align}  \label{vsrfg}
      v^{sr}(f,g) &:= \frac{1}{\beta \pi^2} \iint_{-2}^2 f'(x)g'(y) V^{sr}(x,y)\dif x \dif y + \frac{\kappa_4}{2} \braket{s\odot s,r \odot r}\braket{(2-x^2)f}_\mathrm{1/sc}^2 \\\nonumber
      V^{sr}(x,y)&:=  \log \abs*{1-\braket{s,r}m_\mathrm{sc}(x)\ov{m_\mathrm{sc}(y})} - \log \abs*{1-\braket{s,r} m_\mathrm{sc}(x)m_\mathrm{sc}(y)}.
  \end{align}
Here \(\cE_p\) are error terms which for any \(1\le\tau\ll N\) and any $\xi,\epsilon>0$ may be estimated by\footnote{The exponent in~\eqref{eq:errorssff} can be optimized depending on $\tau$ and \(f\).}
  \begin{equation}
  \label{eq:errorssff}
  \cE_1:=\frac{N^\xi \norm{f}_\tau}{N^{1/2}\tau^{1/2}}, \quad   \cE_p :=N^\xi \left(\frac{1}{N^{1/2}\tau^{3/2}}+\frac{N^\epsilon}{N}+\frac{N^{-\epsilon}}{\tau^{2p-1}}\right)\left(1+\frac{\tau^2}{N^{1-2\epsilon}}\right)\prod_{i\in [p]}\norm{f_i}_\tau,
  \end{equation} 
for $p\ge 2$. Additionally, if $s^1=\dots=s^p$, i.e.\ in the Wigner case, we have the improved bound
  \begin{equation}
  \cE_p := \frac{1}{N^{1/2}\tau^{3/2}}\prod_{i\in [p]}\norm{f_i}_\tau
  \end{equation}
  and the first term of~\cref{vsrfg} for \(\beta=2\) coincides with~\cref{Vf}. 
\end{theorem}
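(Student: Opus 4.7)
The plan is to prove Theorem~\ref{CLT f(H)} by reducing the statistics of $\Tr f_i(H^{s^i})$ to multi-resolvent traces via the Helffer-Sj\"ostrand functional calculus, and then inputting sharp multi-resolvent local laws for the correlated family $(H^s)_{s\in S^1}$. Concretely, I would write
\begin{equation*}
\Tr f(H^s) = -\frac{1}{\pi}\int_\C \bar\partial \widetilde f(z)\,\Tr G^s(z)\,\dif^2 z,\qquad G^s(z):=(H^s-z)^{-1},
\end{equation*}
where $\widetilde f$ is an almost analytic extension of $f$ truncated at $\abs{\Im z}\sim \tau^{-1}$. The weighted norm $\norm{f}_\tau=\tau^2\norm{f}_\infty+\tau\norm{f}_{H^1}+\norm{f}_{H^2}$ is precisely designed so that $\int\abs{\bar\partial\widetilde f(z)}\eta^{-k}\dif^2 z$ is controlled by $\norm{f}_\tau$ in the relevant regimes, with the $\tau^2$ piece absorbing the cutoff region.

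Next I would treat the expectation~\eqref{exp fH} by applying the single-resolvent local law to $\E\braket{G^s(z)}$: the leading order is $m_\mathrm{sc}(z)$ (independent of $s$ since $\norm{s}_2=1$), and the $1/N$ correction comes from the standard cumulant expansion of $\E h_{ab}^{(1,2)}F(H^s)$, producing the fourth-cumulant term with weight $\norm{s}_4^4 = s_1^4+s_2^4$ (since the quartic cumulant of $H^s_{ab}$ involves $\sum_i s_i^4 \kappa_4$) and the $\beta=1$ edge term from the symmetry-class dependent subleading correction. Integration against $\bar\partial\widetilde f$ then produces~\eqref{exp fH} after using $m_\mathrm{sc}$-calculus identities to rewrite contour integrals as averages against $\rho_\mathrm{sc}$ and $1/\pi\sqrt{4-x^2}$.

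For the centered moments~\eqref{eq:vsrwick}, the decisive input is a \emph{two-resolvent local law} for different parameters of the schematic form
\begin{equation*}
\E \bigl\langle (G^s(z)-m_\mathrm{sc}(z))(G^r(w)-m_\mathrm{sc}(w))\bigr\rangle \approx \frac{1}{\beta N^2}\partial_z\partial_w \log\bigl(1-\braket{s,r}m_\mathrm{sc}(z)m_\mathrm{sc}(w)\bigr) + \text{(4th cum.)},
\end{equation*}
the $\braket{s,r}$ dependence arising from the joint covariance $\E h^s_{ab}\overline{h^r_{cd}} = \braket{s,r}/N$ (up to symmetry). A double Helffer-Sj\"ostrand integration against $\bar\partial\widetilde f(z)\bar\partial\widetilde g(w)$, together with the identity $\log\abs{u}=\tfrac12\log(u\bar u)$ applied to the branch structure of $\log(1-\braket{s,r}m_\mathrm{sc}(z)m_\mathrm{sc}(w))$ across the real axis (noting that $m_\mathrm{sc}(x\pm i0) = m_\mathrm{sc}(x)$ or its conjugate), produces exactly $V^{sr}(x,y)$ and the claimed $v^{sr}(f,g)$; the integration by parts moves both derivatives onto $f',g'$. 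The higher-moment Wick structure follows by iterating the cumulant expansion (Stein/Gaussian integration by parts) on $\prod_i(\braket{G^{s^i}(z_i)}-\E\braket{\cdot})$: each pairing of two factors contributes $v^{s^is^j}$, while all unpaired or triply-linked contributions vanish in the large-$N$ limit by the local law.

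The main obstacle is making the error estimate~\eqref{eq:errorssff} sharp in $\tau$, especially the improved Wigner bound when all $s^i$ coincide. The key mechanism is that when $s^i=s^j$ one can exploit the \emph{Ward identity} $G^s(G^s)^*=\Im G^s/\eta$, gaining a factor $\eta^{-1/2}$ in the two-resolvent average beyond what one obtains from separate single-resolvent bounds; this is what produces the improved scaling $1/(N^{1/2}\tau^{3/2})$ for the Wigner case. For genuinely different $s^i\ne s^j$ the Ward identity is unavailable and one must rely on stability analysis of the linearized Matrix Dyson Equation for $\E G^s\otimes G^r$, whose stability operator has norm $\sim 1/(1-\braket{s,r}m_\mathrm{sc}m_\mathrm{sc})$ and degrades as $\braket{s,r}\to 1$; carefully decomposing the spectral-parameter plane into regions where $\eta$ is above/below $N^{-1+\epsilon}$ and optimizing in $\epsilon$ produces the three-term bound in~\eqref{eq:errorssff}. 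All other ingredients, including matching the Wigner case~\eqref{Vf} for $\beta=2$, are direct verifications using the explicit form of $m_\mathrm{sc}$ on $[-2,2]$.
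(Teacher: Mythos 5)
Your proposal is essentially the paper's own strategy: Helffer--Sj\"ostrand reduction with a $\tau$-truncated almost-analytic extension, cumulant expansion plus the single-resolvent local law for the mean, a two-resolvent covariance depending on $\langle s,r\rangle$ feeding a Wick-type iteration, and an error analysis that is sharper when all $s^i$ coincide (the paper packages the resolvent side as a standalone resolvent CLT, Proposition~\ref{pro:resCLT}, and then integrates it, but the ingredients are the same). Two minor imprecisions worth flagging: the display you call a ``two-resolvent local law'' is really the covariance $\E\bigl[\langle G^s-m\rangle\langle G^r-m\rangle\bigr]\sim N^{-2}$ (with constant $-2/\beta$, not $1/\beta$), which the paper derives via cumulant expansion using the genuine multi-resolvent local laws for $\langle G_1G_2\rangle$ and $\langle G_1G_2^2\rangle$ in Lemma~\ref{23g ll}; and the Wigner-case improvement is cited from the sharper known estimates for products of resolvents of a \emph{common} Hamiltonian (see Remark~\ref{rmk multi} and~\cite{MR4372147}), of which the Ward identity is one ingredient but not the full mechanism.
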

We note that~\eqref{vsrfg} generalizes the standard variance calculation yielding~\eqref{Vf} to \(s\ne r\),
see Section~\ref{s=r}.

\begin{remark}\label{rmk multi}
  \Cref{CLT f(H)} verbatim holds true also for the multi-parametric model
  \[s_1H_1+\cdots+s_kH_k\]
  upon interpreting \(\braket{s,r}\) and \(\norm{s}_p\) as the Euclidean inner product and \(p\)-norm in \(\R^k\). Similarly,~\Cref{CLT f(H)} also applies to the un-normalised case \(s\in\R^2\) for which on the rhs.\ of~\cref{exp fH} the function \(f\) has to be
   replaced by \(f(\norm{s}\cdot)\) with $\norm{\cdot}:=\norm{\cdot}_2$  and \(v^{sr}\) from~\cref{vsrfg} 
   has to be replaced by  
  \begin{equation}\label{vsrfg tilde}
    \begin{split}
      \wt v^{sr}(f,g)&:= \frac{\norm{s}\norm{r}}{\beta\pi^2} \iint_{-2}^2 f'(\norm{s}x)g'(\norm{r}y) 
      V^{\frac{s}{\| s\|},\frac{r}{\|r\|}}(x,y)\dif x \dif y \\
      &\qquad + \frac{\kappa_4}{2} \braket{s\odot s,r \odot r}\braket{(2-x^2)f(\norm{s}x)}_\mathrm{1/sc}\braket{(2-x^2)f(\norm{r}x)}_\mathrm{1/sc}.
    \end{split}
  \end{equation}
\end{remark}

\subsection{SFF for Wigner and monoparametric ensemble}

In this section we
specialise~\Cref{CLT f(H)} to the SFF  case. We define the approximate  expectation (rescaled by $1/N$)
  \begin{equation}
    \begin{split}
      e_N^s(t)&:= e(t) + \frac{1}{N}\Bigl[\kappa_4\norm{s}_4^4\left(1-\frac{6}{t^2}\right) J_0(2 t)+\kappa_4\norm{s}_4^4\left(\frac{6}{t^{3}}-\frac{4}{t}\right) J_1(2 t) - \bm1(\beta=1)\frac{J_0(2 t) - \cos(2 t)}{2}\Bigr] \\
      e(t)&:=\frac{J_1(2t)}{t}
    \end{split}
  \end{equation}
  in terms of the Bessel functions \(J_k\) of the first kind.    We also define the approximate variance
  \begin{equation}\label{vpm kappa}
    \begin{split}
      v_{\pm,\kappa}^{sr}(t)&:=v^{sr}(e^{\ii t \cdot},e^{\pm\ii t \cdot})=v_\pm^{sr}(t)+\kappa_4 \braket{s\odot s,r\odot r} J_2(2t)^2,\\
      v_\pm^{sr}(t) &:=   \frac{t^2}{\beta\pi^2} \iint_{-2}^2 \cos\Bigl(t (x\pm y)\Bigr)V^{sr}(x,y)\dif x \dif y,
    \end{split}
  \end{equation}

From Theorem~\ref{CLT f(H)}, choosing $f_i(x)=e^{\pm \ii tx}$ and $\tau=t$, and recalling that $\braket{e^{\pm \ii t H^s}}=N^{-1}\mathrm{Tr}\, e^{\pm \ii t H^s}$, we readily conclude the following asymptotics for SFF of the Wigner and monoparametric ensemble. 
\begin{figure}
  \centering
  \begin{tikzpicture}
    \begin{axis}[width=15cm,height=7.5cm,xmode=log,ymode=log,no markers,legend columns=2]
       \addplot [line width=3pt, domain=11:120, samples=10,col1!20] {2/pi*x/10000};
       \addplot [line width=3pt, domain=200:1000, samples=10,col1!20,dashed] {1/100};
       \addplot[col1] table[col sep=comma,x index=0,y index=1] {ffEGUE.csv};
       \addplot[dotted,very thick,col1] table[col sep=comma,x index=0,y index=1] {EGUE.csv};
       \addplot[col4] table[col sep=comma,x index=0,y index=2] {ffEGUE.csv};
       \addplot[dotted,very thick,col4] table[col sep=comma,x index=0,y index=2] {EGUE.csv};
       \legend{$2t/(\pi N^2)$,$1/N$,$\E_H \abs{\braket{e^{\ii t H}}}^2$,$E_\mathrm{wig}(t)$,$\Std_H\abs{\braket{e^{\ii t H}}}^2$,$S_\mathrm{wig}(t)$}
    \end{axis}
 \end{tikzpicture} 
    \begin{tikzpicture}
     \begin{axis}[width=15cm,height=7.5cm,xmode=log,ymode=log,no markers,legend columns=2]
        \addplot[col1] table[col sep=comma,x index=0,y index=1] {ffEstd.csv};
        \addplot[dotted,very thick,col1] table[col sep=comma,x index=0,y index=1] {Eint.csv};
        \addplot[thick,col4] table[col sep=comma,x index=0,y index=3] {ffEstd.csv};
        \addplot[dotted,very thick,col4] table[col sep=comma,x index=0,y index=2] {Stdint.csv};
        \addplot [line width=3pt, domain=15:120, samples=10,col2!30] {.8*x^(.75)/10000};
        \addplot [line width=3pt, domain=200:1000, samples=10,col2!30,dashed] {1.5*x^(-.5)/100^0.75};
        \addplot[thick,col2] table[col sep=comma,x index=0,y index=2] {ffEstd.csv};
        \addplot[dotted,very thick,col2] table[col sep=comma,x index=0,y index=1] {Stdint.csv};
        \legend{$\E_H \E_s \abs{\braket{e^{\ii t H^s}}}^2$,$E_\mathrm{wig}(t)$,$\bigl(\E_H \Var_s\abs{\braket{e^{\ii t H^s}}}^2\bigr)^{1/2}$,$\bigl(S_\mathrm{wig}(t)^2-S_\mathrm{res}(t)^2\bigr)^{1/2}$,$0.8t^{3/4}/N^2$,$1.5t^{-1/2}/N^{3/4}$,$\Std_H \E_s \abs{\braket{e^{\ii t H^s}}}^2$,$S_\mathrm{res}(t)$} 
     \end{axis}
  \end{tikzpicture}
  \caption{In the first plot we compare the empirical mean (red) and standard deviation (blue) of \(\abs{\braket{e^{\ii t H}}}^2\) obtained from sampling \(10,000\) independent \(100\times 100\) GUE matrices \(H\) with our approximation~\cref{GUE SFFbis}.
  In the second plot we similarly compare the empirical mean (red) and 
  variance (blue), with respect to $s$, 
   obtained from sampling \(500\) independent scalar random variables \(s\) (from the uniform distribution on $S^1$) 
  and \(500\) independent \(100\times100\) GUE matrix pairs \(H_1,H_2\), with the prediction~\cref{eq FF concl}. 
  We also test the precision of the latter GUE-pair sampling by finding the empirical standard deviation (with respect to \(H_1,H_2\))
  of the empirical mean of the monoparametric SFF (orange).
  We observe that for both ensembles our resolvent approximation seems valid for  all \(t<N\).}  \label{fig std}
\end{figure}

\begin{theorem}[SFF for the Wigner ensemble]\label{sff wigner}
  For deterministic \(t>0\) (possibly $N$-dependent) we have 
  \begin{equation}
  \label{GUE SFFbis}
    \begin{split}
      \E_H \abs{\braket{e^{\ii t H}}}^2 &= E_\mathrm{wig}(t)(1+o(1))  \quad \mathrm{for} \quad t\ll N^{5/11},\\
      \Var_H \abs{\braket{e^{\ii t H}}}^2 &=S_\mathrm{wig}(t)^2(1+o(1)) \quad \mathrm{for} \quad t\ll N^{5/17},
    \end{split}
  \end{equation}
  and we have the asymptotics 
  \begin{equation}\label{asymp}
    \begin{split}
      E_\mathrm{wig}(t):=e(t)^2 + \frac{v^{ee}_{-,\kappa}(t)}{N^2} &\approx \begin{cases}   \frac{J_1(2t)^2}{t^2}, &  1\ll t\ll \sqrt{N} \\
         \frac{2}{\pi} \frac{t}{N^2}, &
          \sqrt{N}\ll t\ll N,\end{cases} \\
S_\mathrm{wig}(t):=\biggl(\frac{v_{+,\kappa}^{ee}(t)^2+v_{-,\kappa}^{ee}(t)^2}{N^4} +  2e(t)^2\frac{v_{+,\kappa}^{ee}(t)+v_{-,\kappa}^{ee}(t)}{N^2}\biggr)^{1/2} &\approx \begin{cases}  \frac{2J_1(2t)}{\sqrt{\pi t}N}  , & 1\ll  t\ll \sqrt{N} \\
       \frac{2}{\pi} \frac{t}{N^2}, &
         \sqrt{N}\ll t\ll N,\end{cases}
    \end{split}
    \end{equation}
    where we set \(e:=(1,0)\in S^1\).
    \end{theorem}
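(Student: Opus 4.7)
The plan is to apply the Wigner-case refinement of Theorem~\ref{CLT f(H)} to the test functions $f^\pm(x) := e^{\pm \ii tx}$ with the choice $\tau = t$, so that $\norm{f^\pm}_\tau \sim t^2$ by~\eqref{eq:taunorm}, and then extract the SFF moments by Wick-type combinatorics. Writing $X_t := \Tr e^{\ii tH}$, Hermiticity of $H$ yields $X_{-t} = \overline{X_t}$ so $\SFF(t) = X_tX_{-t}/N^2$, and only the first four centred moments of $X_t$ are needed. In the Wigner case~\eqref{eq:vsrwick} comes with the improved error $\cE_p \sim t^{(3p-3)/2}/N^{1/2}$ after setting $\tau=t$, and this bound will produce the stated validity thresholds.

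For the expectation I would decompose
\begin{equation*}
\E_H\abs{X_t}^2 = (\E_H X_t)^2 + \E_H(X_t - \E_H X_t)(X_{-t} - \E_H X_{-t}).
\end{equation*}
Applying~\eqref{exp fH} with $f^+$ recovers the deterministic piece $Ne(t)$ together with the subleading $\kappa_4$ and GOE edge corrections of $Ne_N^e(t)$, modulo $\cE_1$. Applying~\eqref{eq:vsrwick} with $p=2$ and $(f_1,f_2) = (f^+,f^-)$ identifies the covariance with $v^{ee}_{-,\kappa}(t)$ modulo $\cE_2$. Dividing by $N^2$ gives $\E_H\SFF(t) = E_\mathrm{wig}(t) + \landauO{\cE_2/N^2}$, and $\cE_2/N^2 \sim t^{5/2}/N^{5/2}$ is dominated by the smaller of the two main terms $e(t)^2 \sim 1/t^3$ and $v^{ee}_{-,\kappa}/N^2 \sim t/N^2$ exactly when $t \ll N^{5/11}$.

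For the variance set $Y_t := X_t - \E_H X_t$ and $c := \E_H X_t \in \R$ (the imaginary part is negligible). Expanding $\abs{X_t}^2 - \E_H\abs{X_t}^2 = c(Y_t+Y_{-t}) + (Y_tY_{-t} - \E_H Y_tY_{-t})$ and squaring produces three pieces: $c^2\E_H(Y_t+Y_{-t})^2$, a mixed piece that vanishes to leading order because its Gaussian analogue is an odd-order Wick contraction (controlled by~\eqref{eq:vsrwick} at $p=3$), and $\Var_H(Y_tY_{-t})$. The three pairings of~\eqref{eq:vsrwick} with $p=2$ evaluate the first to $2c^2(v^{ee}_{+,\kappa} + v^{ee}_{-,\kappa})$, while applying~\eqref{eq:vsrwick} with $p=4$ together with the Gaussian Wick pairing yields $\Var_H(Y_tY_{-t}) = v^{ee}_{+,\kappa}(t)^2 + v^{ee}_{-,\kappa}(t)^2$ up to $\cE_4 \sim t^{13/2}/N^{1/2}$. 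Dividing by $N^4$ reproduces $S_\mathrm{wig}(t)^2$, and $\cE_4/N^4 \sim t^{13/2}/N^{9/2}$ is beaten by the slope-dominant $S_\mathrm{wig}^2 \sim 1/(t^2N^2)$ precisely for $t \ll N^{5/17}$.

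To obtain the explicit asymptotic form in~\eqref{asymp} I would insert the Bessel asymptotic $J_1(2t) \sim (\pi t)^{-1/2}\cos(2t-3\pi/4)$ into $e(t)$ and analyse the oscillatory integrals $v^{ee}_\pm(t) = (t^2/\beta\pi^2)\iint_{-2}^2 \cos(t(x\pm y))V^{ee}(x,y)\dif x\dif y$. After an integration by parts in both variables $V^{ee}$ reduces to the classical semicircle pairing of~\eqref{Vf}; stationary phase then concentrates $v^{ee}_-$ on the diagonal and yields the $H^{1/2}$-type linear growth $v^{ee}_-(t) \approx 2t/\pi$, while $v^{ee}_+$ has no stationary line in $(-2,2)^2$ and is $o(t)$. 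Combining these with the oscillatory $e(t)^2 \sim 1/t^3$ produces the cross-over at $t \sim \sqrt{N}$ and the ramp value $2t/(\pi N^2)$. The main technical obstacle is the sharpness of the $5/11$ and $5/17$ thresholds: both rely on the Wigner refinement of Theorem~\ref{CLT f(H)} and on the optimal balancing $\tau=t$ in~\eqref{eq:taunorm}; any less delicate choice would leave the slope regime short of $N^{5/11}$ and miss the stated variance threshold altogether.
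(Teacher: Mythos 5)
Your proposal follows the paper's argument exactly: insert $f^\pm = e^{\pm\ii t\cdot}$ with $\tau=t$ into the Wigner refinement of Theorem~\ref{CLT f(H)}, decompose $\E_H\abs{\braket{e^{\ii tH}}}^2$ and $\Var_H\abs{\braket{e^{\ii tH}}}^2$ via the mean/centred decomposition and the asymptotic Wick theorem, and check when the CLT errors are beaten by $E_\mathrm{wig}(t)$ and $S_\mathrm{wig}(t)^2$, which gives the thresholds $N^{5/11}$ and $N^{5/17}$. Two small corrections: the general Wigner scaling should read $\cE_p\sim t^{2p-3/2}/N^{1/2}$ (not $t^{(3p-3)/2}/N^{1/2}$; your concrete uses $\cE_2\sim t^{5/2}/N^{1/2}$ and $\cE_4\sim t^{13/2}/N^{1/2}$ are nevertheless right, so the thresholds come out unharmed), and the claim that $v^{ee}_+(t)=o(t)$ should be attributed not to an absent stationary line $\{x+y=0\}$ in $(-2,2)^2$ (it is there) but to the fact that the logarithmic singularity of $V^{ee}$ lies on $\{x=y\}$ and so avoids the stationary set away from the origin --- the paper sidesteps this heuristic entirely by evaluating $v^{ss}_\pm$ in closed Bessel-series form in Lemma~\ref{lemma vss}.
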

  This result shows that
  $ S_\mathrm{wig}(t)\ll E_\mathrm{wig}(t)$ in the slope regime, $t\ll \sqrt{N}$,
   and $ S_\mathrm{wig}(t)\approx E_\mathrm{wig}(t)$
    in the ramp regime, $\sqrt{N}\ll t\ll N$ (see the first plot in Figure~\ref{fig std}). In particular, in the ramp regime the SFF is a non-negative random variable whose fluctuations are of the same size as its expectation.  Thus the SFF  is not self-averaging in 
    the ramp regime, while it is self-averaging in the slope regime but only 
    owing to the dominance  of the function $e(t)$ representing the edge effect.
  If one discounts the edge effect, i.e.\ artificially  removes $e(t)$, then 
  $ S_\mathrm{wig}(t)\approx E_\mathrm{wig}(t)$  would hold for all $1\ll t\ll N$,
  demonstrating the universal behavior of SFF in the entire slope-dip-ramp regime.

\begin{theorem}[SFF for the monoparametric ensemble~\eqref{Hs def}]\label{sff mono}
  For \(t>0\) (possibly $N$-dependent) we have 
    \begin{equation}
  \label{eq FF concl}
    \begin{split}
        \E_H\E_s\abs{\braket{e^{\ii t H^s}}}^2 &= E_\mathrm{wig}(t)(1+o(1))\quad \mathrm{for} \quad t\ll N^{3/7}  \\
        \E_H\Var_s\abs{\braket{e^{\ii t H^s}}}^2 & = \Bigl(S_\mathrm{wig}(t)^2-S_\mathrm{res}(t)^2\Bigr)(1+o(1))\quad \mathrm{for}\quad t\ll N^{5/17}\\
        \Var_H\E_s\abs{\braket{e^{\ii t H^s}}}^2 & = S_\mathrm{res}(t)^2 (1+o(1))\quad \mathrm{for} \quad t\ll N^{1/4}
    \end{split}
    \end{equation}
    where the function
    \begin{equation}\label{ES}
      \begin{split}
        S_\mathrm{res}(t)&:=\sqrt{\E_s\E_r \Bigl(\frac{v_{+,\kappa}^{sr}(t)^2+v_{-,\kappa}^{sr}(t)^2}{N^4} +  2e(t)^2\frac{v_{+,\kappa}^{sr}(t)+v_{-,\kappa}^{sr}(t)}{N^2}\Bigr)}
      \end{split}
    \end{equation}
    satisfies 
        \begin{equation}\label{EHmonSTmon}
\begin{split}
S_\mathrm{res}(t)  \sim \begin{cases}    \frac{\psi(t)}{Nt^{5/4}}  , &  1\ll t\ll \sqrt{N} \\
        \frac{t^{3/4}}{N^2}, &
         \sqrt{N}\ll t\ll N,\end{cases}
\end{split}
   \end{equation}
   where $\psi(t)\sim 1$ is a positive function with some oscillation.
   \end{theorem}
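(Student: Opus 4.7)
The plan is to deduce \Cref{sff mono} from \Cref{CLT f(H)} applied with $p\in\{1,2,3,4\}$, test functions $f_i(x)=e^{\pm\ii tx}$, and weight $\tau:=t$. Writing $X_s:=\braket{e^{\ii tH^s}}$ and $a_s:=e_N^s(t)$, \Cref{CLT f(H)} yields
\[
\E_H X_s=a_s+\cO\Bigl(\frac{\cE_1}{N}\Bigr),\quad \Cov_H(X_s,X_r)=\frac{v_{+,\kappa}^{sr}(t)}{N^2}+\cO\Bigl(\frac{\cE_2}{N^2}\Bigr),\quad \Cov_H(X_s,\ov{X_r})=\frac{v_{-,\kappa}^{sr}(t)}{N^2}+\cO\Bigl(\frac{\cE_2}{N^2}\Bigr),
\]
together with the Wick formula for the centred joint moments of $\{X_s,\ov{X_s}\}$ up to $\cO(\cE_p/N^p)$. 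The crucial structural fact is that both $a_s$ and $v_{\pm,\kappa}^{sr}$ are real-valued, which collapses many Wick pairings.

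From here I would expand $|X_s|^2=a_s^2+2a_s\Re(X_s-a_s)+|X_s-a_s|^2$ and handle the three identities in turn. For the first, $\E_H|X_s|^2=a_s^2+v_{-,\kappa}^{ss}(t)/N^2+\mathrm{err}$; since $V^{ss}=V^{ee}$ by $\braket{s,s}=1$, and $a_s-e(t)$ and $v_{\pm,\kappa}^{ss}(t)-v_{\pm,\kappa}^{ee}(t)$ depend on $s$ only through the subleading factor $\kappa_4\norm{s}_4^4$, taking $\E_s$ reproduces $E_\mathrm{wig}(t)(1+o(1))$. For the third identity I would write
\[
\Var_H\E_s|X_s|^2=\E_s\E_r\Cov_H\bigl(|X_s|^2,|X_r|^2\bigr)
\]
and apply Wick to $|X_s|^2|X_r|^2$: the odd-in-$Y_s,Y_r$ pairings vanish and the remaining contractions give exactly
\[
\Cov_H(|X_s|^2,|X_r|^2)=2a_sa_r\,\frac{v_{+,\kappa}^{sr}(t)+v_{-,\kappa}^{sr}(t)}{N^2}+\frac{v_{+,\kappa}^{sr}(t)^2+v_{-,\kappa}^{sr}(t)^2}{N^4}+\mathrm{err},
\]
whose $\E_s\E_r$-average (after replacing $a_s,a_r$ by $e(t)$ to leading order) is precisely~\eqref{ES}. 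The middle identity follows from the law of total variance
\[
\E_H\Var_s|X_s|^2=\E_s\Var_H|X_s|^2+\Var_s\E_H|X_s|^2-\Var_H\E_s|X_s|^2;
\]
the first term is $\approx S_\mathrm{wig}(t)^2$ by \Cref{sff wigner} (the $s$-dependence from $\kappa_4\norm{s}_4^4$ is subleading), the second is $\cO(N^{-2})$ and thus absorbed in $o(1)$, and the third is the quantity just computed.

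For the asymptotics~\cref{EHmonSTmon} of $S_\mathrm{res}(t)$, I would exploit that $V^{sr}(x,y)$ depends on $s,r$ only through $\alpha:=\braket{s,r}$, whose pushforward under the uniform law on $(S^1)^2$ is the arcsine distribution $d\mu(\alpha)=d\alpha/(\pi\sqrt{1-\alpha^2})$. A stationary-phase analysis of
\[
v_\pm^\alpha(t)=\frac{t^2}{\beta\pi^2}\iint_{-2}^2\cos\bigl(t(x\pm y)\bigr)\bigl[\log\abs{1-\alpha m_\mathrm{sc}(x)\ov{m_\mathrm{sc}(y)}}-\log\abs{1-\alpha m_\mathrm{sc}(x)m_\mathrm{sc}(y)}\bigr]dxdy
\]
as $t\to\infty$ separates the endpoint contributions at $x,y\in\{\pm2\}$ (each of order $t^{-1/2}$) from the near-diagonal contribution as $\alpha\to\pm1$ (linear in $t$). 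Integrating the resulting power laws against $d\mu$ produces the $t^{-5/4}$ decay of $S_\mathrm{res}$ in the slope regime and the $t^{3/4}/N^2$ growth in the ramp regime; the oscillatory prefactor $\psi(t)$ records the interference between the two boundary contributions at $x=\pm2, y=\pm2$.

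The main obstacle is fitting the error terms~\cref{eq:errorssff} below the (small) signal. With $\tau=t$ and suitable spectral localization so that $\norm{e^{\pm\ii t\cdot}}_\tau\sim t^2$, the $p$-th Wick error after division by $N^p$ has size dominated by $N^\xi t^{2p}/N^p$ times $(\tau^{1-2p}N^{-\epsilon}+N^{-1/2}\tau^{-3/2}+\ldots)$ and, after optimizing $\epsilon$, must be compared with $S_\mathrm{res}^2\sim N^{-2}t^{-5/2}$ in the slope regime. This comparison yields the weakest threshold $t\ll N^{1/4}$ in the third line of~\cref{eq FF concl}; the better thresholds $N^{3/7}$ and $N^{5/17}$ in the first two lines result from the larger signals $E_\mathrm{wig}(t)$ and $S_\mathrm{wig}(t)^2-S_\mathrm{res}(t)^2$. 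The stationary-phase evaluation of $v_\pm^\alpha(t)$, which non-trivially produces the non-integer exponent $5/4$, is the other delicate point.
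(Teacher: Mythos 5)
Your proposal matches the paper's approach: Theorem~\ref{CLT f(H)} with $f_i=e^{\pm\ii t\cdot}$ and $\tau=t$, the Wick formula up to $p=4$, and a stationary phase evaluation of $\E_s\E_r v_\pm^{sr}(t)$ exploiting the square-root tail of the distribution of $U=\braket{s,r}$ near $U=\pm1$. Your covariance identity $\Cov_H(|X_s|^2,|X_r|^2)=2a_sa_r(v_{+,\kappa}^{sr}+v_{-,\kappa}^{sr})/N^2+((v_{+,\kappa}^{sr})^2+(v_{-,\kappa}^{sr})^2)/N^4+\text{error}$ is correct (using that $v_{\pm,\kappa}^{sr}=v_{\pm,\kappa}^{rs}$ are real), and the law-of-total-variance decomposition is a clean way to organize what the paper simply declares ``analogous to Theorem~\ref{sff wigner}''. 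Two minor imprecisions in your stationary phase sketch: (i) the paper assumes only a square-integrable density for $s$, not the uniform law, so the key input is the tail bound $\rho^*(U)\lesssim(1-U^2)^{-1/2}$ rather than the exact arcsine density; (ii) the contribution from $|U|\le 1-\delta$, including the spectral-edge boundary terms at $x,y=\pm2$, is $O(1)$ after one integration by parts per variable (since $m'$ is integrable), not $O(t^{-1/2})$ per endpoint, and the oscillation in $\psi(t)$ comes primarily from $e(t)^2=J_1(2t)^2/t^2$ multiplying $\E_s\E_r(v_+^{sr}+v_-^{sr})\sim\sqrt{t}$. The near-diagonal $U\to1$ mechanism you identify as producing $\sqrt{t}$ and $t^{3/2}$ is exactly the content of Lemma~\ref{Rlemma}.
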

   In particular, this result immediately  shows the following concentration effect:
\begin{corollary}
  For \(1\ll t\ll N^{1/4}\) it holds that
  \begin{equation}\label{cor Var bd}
    \Var_H \E_s\abs{\braket{e^{\ii t H^s}}}^2 \lesssim \frac{1}{\sqrt{t}} \Var_H \abs{\braket{e^{\ii t H}}}^2,
  \end{equation}
  i.e.\ averaging in \(s\) reduces the size of the fluctuation of the SFF by a factor of $t^{-1/4}$. 
\end{corollary}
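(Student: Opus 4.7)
The strategy is to combine the variance asymptotics of Theorems~\ref{sff wigner} and~\ref{sff mono}, which are simultaneously valid throughout $1\ll t\ll N^{1/4}$ (since $N^{1/4}<N^{5/17}$), and then compare the two resulting deterministic expressions. Specifically, Theorem~\ref{sff wigner} yields $\Var_H\abs{\braket{e^{\ii t H}}}^2 = S_{\mathrm{wig}}(t)^2(1+o(1))$, while the third line of~\eqref{eq FF concl} yields $\Var_H\E_s\abs{\braket{e^{\ii t H^s}}}^2 = S_{\mathrm{res}}(t)^2(1+o(1))$. The corollary therefore reduces to proving the deterministic inequality $S_{\mathrm{res}}(t)^2\lesssim S_{\mathrm{wig}}(t)^2/\sqrt{t}$ on the same range.

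Since $N^{1/4}\ll\sqrt{N}$, the entire range lies strictly in the slope regime, so I would invoke the slope-regime asymptotics from~\eqref{asymp} and~\eqref{EHmonSTmon}, namely $S_{\mathrm{wig}}(t)^2\sim J_1(2t)^2/(tN^2)$ and $S_{\mathrm{res}}(t)^2\sim \psi(t)^2/(N^2 t^{5/2})$. Forming the ratio and using the classical large-argument behavior $J_1(2t)^2\sim 1/(\pi t)$ together with the boundedness $\psi(t)\sim 1$, one obtains $S_{\mathrm{res}}(t)^2/S_{\mathrm{wig}}(t)^2\sim t^{-1/2}$, which is exactly the claimed bound.

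The main obstacle is the pointwise behavior at the oscillations: $J_1(2t)^2$ vanishes periodically, and any \emph{pointwise} statement must reconcile this with the fact that $\psi(t)$ is only bounded away from zero on average. At a zero of $J_1(2t)$ the leading slope contribution to $S_{\mathrm{wig}}(t)^2$ collapses to the subleading quantity $(v_+^2+v_-^2)/N^4\sim t^2/N^4$ appearing as the first term inside the square root in~\eqref{asymp}, so I would use the uniform lower bound $S_{\mathrm{wig}}(t)^2\gtrsim \max\{J_1(2t)^2/(tN^2),\,t^2/N^4\}$. Inspecting the oscillatory integrals defining $v_{\pm,\kappa}^{sr}$ in~\eqref{vpm kappa} and the definition of $S_{\mathrm{res}}$ in~\eqref{ES}, one checks that the zeros of $\psi(t)$ are inherited from those of $J_1(2t)$ (via the same stationary-phase mechanism that gives $e(t)=J_1(2t)/t$), so the oscillatory factors cancel consistently and the $t^{-1/2}$ factor persists uniformly in $t$, not merely in an averaged sense.
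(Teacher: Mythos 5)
Your reduction to the deterministic inequality $S_\mathrm{res}(t)^2\lesssim t^{-1/2}S_\mathrm{wig}(t)^2$ is correct: for $1\ll t\ll N^{1/4}$ both variance asymptotics of Theorems~\ref{sff wigner} and~\ref{sff mono} are valid (since $1/4<5/17$), this range lies in the slope regime, and the inequality is exactly the paper's~\eqref{compa}, so the overall route matches the intended one. You also correctly identify that a naive pointwise comparison of the asymptotics in~\eqref{asymp} and~\eqref{EHmonSTmon} breaks down near the zeros of $J_1(2t)$, and the lower bound $S_\mathrm{wig}(t)^2\gtrsim \max\{J_1(2t)^2/(tN^2),\,t^2/N^4\}$ is right.

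Where the argument goes astray is the claim that ``the zeros of $\psi(t)$ are inherited from those of $J_1(2t)$.'' That is not correct: at a zero of $J_1(2t)$ the formula~\eqref{ES} still gives $S_\mathrm{res}(t)^2\sim \E_s\E_r(v_+^2+v_-^2)/N^4\sim t^{3/2}/N^4>0$, so $\psi(t)$ is bounded below by $\sim t^2/N$ there and does not vanish. More fundamentally, no statement about the location of the zeros of $\psi$ is needed; the correct (and cleaner) mechanism is purely structural. Both $S_\mathrm{wig}(t)^2$ in~\eqref{asymp} and $S_\mathrm{res}(t)^2$ in~\eqref{ES} have the identical two-term form
\begin{equation*}
\frac{v_+^2+v_-^2}{N^4}\; +\;\frac{2\,J_1(2t)^2}{t^2}\cdot\frac{v_++v_-}{N^2},
\end{equation*}
with $v_\pm=v_{\pm,\kappa}^{ee}(t)$ in the first case and the $\E_s\E_r$-average of the $v_{\pm,\kappa}^{sr}(t)$-quantities in the second, and the factor $e(t)^2=J_1(2t)^2/t^2$ is the same in both. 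The comparison is therefore term by term: from Lemma~\ref{lemma vss} one has $v_-^{ee}(t)\sim t$ (hence $(v_-^{ee})^2\sim t^2$), while the stationary-phase estimates~\eqref{tar1}--\eqref{tar3} give $|\E_s\E_r v_\pm^{sr}(t)|\lesssim t^{1/2}$ and $\E_s\E_r v_\pm^{sr}(t)^2\lesssim t^{3/2}$ (the $\kappa_4 J_2(2t)^2$ corrections being $O(1/t)$, hence negligible). Thus the first term of $S_\mathrm{res}^2$ is $\lesssim t^{-1/2}$ times the first term of $S_\mathrm{wig}^2$, and likewise for the second term with the common factor $J_1(2t)^2/t^2$ cancelling in the ratio. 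Summing gives $S_\mathrm{res}(t)^2\lesssim t^{-1/2}S_\mathrm{wig}(t)^2$ uniformly in $t$, with no case distinction near the zeros of $J_1$ required. Your proof reaches the right conclusion and has all the needed ingredients, but you should replace the ``inherited zeros'' heuristic by this term-by-term comparison.
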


 Note that  
 \be\label{compa}
 S_\mathrm{res}(t) \lesssim t^{-1/4} S_\mathrm{wig}(t)
 \ee
 both in the slope and ramp regimes showing that not only the 
 expectation but also the variance of the SFF for the monoparametric
 ensemble coincide with those for the Wigner ensemble to leading order,
  hence they follow the  universal pattern (red and blue curves in the second plot in Figure~\ref{fig std}). 
   However, the dependence 
 of $\E_s [\SFF(t)]$ on the fixed Wigner matrix pair $(H_1, H_2)$ is 
 still present, albeit to a lower order, expressed by the \emph{residual standard deviation} $S_\mathrm{res}(t)$ whose relative size  decreases as $t^{-1/4}$
 as $t$ increases (orange curves in  Figure~\ref{fig std}). It is quite remarkable that a single random mode $s$ generates almost the entire randomness in the ensemble that is responsible for the  universality of SFF. A similar phenomenon was manifested in the Wigner-Dyson universality  
 proven in~\cite{2106.10200}.

\begin{remark}\label{ext}
 Based upon extensive numerics (see \Cref{fig std}) we believe that \eqref{GUE SFFbis}, \eqref{eq FF concl} and~\eqref{cor Var bd}
 hold up to any $t\ll N$, i.e.\ in the entire slope-dip-ramp regime and not only up
 to some fractional power of $N$ as stated and  proved rigorously. The proof for the entire regime $t\ll N$
 is out of reach with the current technology based upon
the multi-resolvent local law Lemma~\ref{23g ll} whose error term does not trace the expected improvement due to different spectral parameters $z_1\ne z_2$. We expect that the entire ramp regime \(t\ll N\) should be accessible by resolvent techniques if a sharp version of Lemma~\ref{23g ll}, tracing the gain from \(z_1\ne z_2\), was available.
 \end{remark}
 
\begin{remark}
  We stated~\Cref{sff wigner,sff mono} only for the first two moments but the CLT from~\Cref{CLT f(H)} allows us to compute arbitrary moments $\E\abs{\braket{e^{\ii t H}}}^{2m}$ for the Wigner case and $\E_s\abs{\braket{e^{\ii t H^s}}}^{2m}$ for the monoparametric case (together with their concentration in the $(H_1, H_2)$-space), albeit with worsening error estimates. This would lead  to rigorous results of the type~\eqref{GUE SFFbis} and~\eqref{eq FF concl} but for a shorter time scale $t\ll N^{c(m)}$ with some $c(m)>0$. However,
  in the spirit
  of Remark~\ref{ext},  we believe that \(\braket{e^{itH^s}}\)
   can be approximated for  any \(t\ll N\), to leading order,  by a family of complex 
  Gaussians \(\xi(t,s)\) of mean and variance
  \begin{equation}\label{xi cov}
    \E \xi(t,s) = e(t), \quad \E (\xi(t,s)-e(t))(\xi(t',s')-e(t')) = \frac{1}{N^2} v^{ss'}(e^{\ii t\cdot},e^{\ii t' \cdot})
  \end{equation}
  with \(v^{sr}\) from~\eqref{vsrfg}. Note that~\cref{xi cov} also specifies the covariance of \(\xi(t,s)\) and \(\ov{\xi(t',s')}=\xi(-t',s')\)
  for different times.
\end{remark}

The next lemma, to be proven in Section~\ref{s=r}, provides explicit asymptotic formulas for $v^{ss}_\pm(t)$, in particular they imply
the asymptotics in~\eqref{asymp} together with $e(t)\sim t^{-3/2}$ (up to some oscillation due to the Bessel function)
 in the large $t$  regime.

\begin{lemma}\label{lemma vss}
  For \(s=r\) the functions  $ v^{ss}_\pm(t)$ appearing in~\eqref{vpm kappa} can be expressed as
  \begin{equation}\label{v00pm}
  \begin{split}
     v^{ss}_-(t) &=t^2 \Bigl[J_0(2t)^2 + 2J_1( 2 t)^2 - 
     J_0( 2 t)J_2( 2 t)\Bigr] = \frac{2 t}{\pi }-\frac{1+2\sin(4t)}{16 \pi  t} + \landauO{t^{-2}}\\
     v^{ss}_+(t) &=- t J_0(2t) J_1(2t) =\frac{\cos (4 t)}{2 \pi }-\frac{2+\sin (4 t)}{16 \pi  t}+ \landauO{t^{-2}}.
  \end{split}
  \end{equation}  
\end{lemma}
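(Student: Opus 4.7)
The plan is to evaluate the double integral defining $v_\pm^{ss}(t)$ in closed form via a Fourier/Bessel expansion, and then extract the large-$t$ behavior using the Hankel representation of Bessel functions.

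First I would pass to the angular parametrization $x = -2\cos\alpha$, $y = -2\cos\beta$ with $\alpha,\beta \in [0,\pi]$, chosen so that $m_\mathrm{sc}(x) = e^{i\alpha}$ and $dx\,dy = 4\sin\alpha\sin\beta\,d\alpha\,d\beta$. The kernel then simplifies to $V^{ss}(x,y) = \log|1-e^{i(\alpha-\beta)}| - \log|1-e^{i(\alpha+\beta)}|$, and the standard Fourier series $\log|1-e^{i\theta}| = -\sum_{k\ge 1}\cos(k\theta)/k$ expands it into a double series proportional to $\sum_{k\ge 1}\sin(k\alpha)\sin(k\beta)/k$. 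Expanding $\cos(2t(\cos\alpha\mp\cos\beta))$ by the cosine addition formula separates the integrand, and each resulting single-variable integral $\int_0^\pi\cos(2t\cos\alpha)\sin(k\alpha)\sin\alpha\,d\alpha$ (and its sine analogue) is evaluated using $\sin(k\alpha)\sin\alpha = \frac{1}{2}[\cos((k-1)\alpha)-\cos((k+1)\alpha)]$ together with the Jacobi--Anger identity $\int_0^\pi e^{iz\cos\alpha}\cos(n\alpha)\,d\alpha = \pi i^n J_n(z)$. By parity exactly one of the two is nonzero for each $k$, and the Bessel recurrence $J_{k-1}+J_{k+1} = (k/t)J_k$ collapses the output to $\frac{\pi k}{2t}$ times $J_k(2t)$ with a $k$-dependent sign. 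Summing over $k$ and tracking the signs coming from the two expansions yields, up to constants fixed by the conventions in~\eqref{vsrfg},
\[
v_-^{ss}(t) \;\propto\; \sum_{k\ge 1}kJ_k(2t)^2, \qquad v_+^{ss}(t) \;\propto\; \sum_{k\ge 1}(-1)^{k+1}kJ_k(2t)^2.
\]

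Next I would evaluate these series in closed form by an ODE argument. From $kJ_k = \frac{z}{2}(J_{k-1}+J_{k+1})$ and $2J_k' = J_{k-1}-J_{k+1}$ one gets $2kJ_kJ_k' = \frac{z}{2}(J_{k-1}^2 - J_{k+1}^2)$; summing and telescoping gives $f'(z) - f(z)/z = \frac{1}{2}J_0(z)J_1(z)$ for $f(z):=\sum_{k\ge 1}kJ_k(z)^2$, and analogously $g'(z) = \frac{z}{2}(J_0^2 - J_1^2)$ for $g(z):=\sum_{k\ge 1}(-1)^{k+1}kJ_k(z)^2$, both with vanishing initial value at $z=0$. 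A direct differentiation using $J_0' = -J_1$, $J_1' = J_0 - J_1/z$, $J_2' = J_1 - 2J_2/z$, and the recurrence $J_2 = 2J_1/z - J_0$ verifies that $R(z) := \frac{z^2}{4}(J_0^2 + 2J_1^2 - J_0 J_2)$ solves the first ODE and $\frac{z}{2}J_0(z)J_1(z)$ solves the second. Setting $z = 2t$ produces the two closed-form equalities in~\eqref{v00pm}.

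For the asymptotic expansions I would substitute the Hankel expansion $J_k(z) = \sqrt{2/(\pi z)}[P(k,z)\cos\chi_k - Q(k,z)\sin\chi_k] + O(z^{-7/2})$, with $\chi_k = z - k\pi/2 - \pi/4$, $P(k,z) = 1 - (4k^2-1)(4k^2-9)/(128 z^2) + \cdots$, and $Q(k,z) = (4k^2-1)/(8z) + \cdots$, into the equivalent forms $v_-^{ss}(t) = 2t^2(J_0^2+J_1^2) - tJ_0J_1$ and $v_+^{ss}(t) = -tJ_0J_1$ at $z = 2t$. Using $\chi_1 = \chi_0 - \pi/2$ together with $\sin 2\chi_0 = -\cos 4t$ and $\cos 2\chi_0 = \sin 4t$, the $O(t)$ part yields $2t/\pi$; the $O(1)$ contributions $\pm\cos 4t/(2\pi)$ cancel between the two pieces in the $v_-^{ss}$ case and survive in $v_+^{ss}$; and the $O(1/t)$ remainders assemble into the stated $(1+2\sin 4t)/(16\pi t)$ and $(2+\sin 4t)/(16\pi t)$. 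The main technical obstacle lies precisely in this last step for $v_-^{ss}$: because the leading $O(1)$ oscillations cancel between $\frac{z^2}{2}(J_0^2+J_1^2)$ and $\frac{z}{2}J_0 J_1$, the $O(1/t)$ coefficient depends sensitively on the $O(z^{-2})$ polynomial corrections $-9/(128 z^2)$ in $J_0$ and $+15/(128 z^2)$ in $J_1$, which contribute at the same order as the surviving piece of $tJ_0J_1$. Keeping only two terms of the Hankel expansion produces a visibly different numerator; one must retain the $P(k,z)$ correction and carefully apply the $\sin^2+\cos^2$ and double-angle identities for each of $J_0^2$, $J_1^2$, $J_0 J_1$ separately.
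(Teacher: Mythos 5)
Your proposal is correct in substance and arrives at the same chain of identities as the paper, but by a genuinely different and more self-contained route. The paper reduces $v^{ss}_\pm$ via the integration-by-parts identity~\eqref{id} to the manifestly positive difference-quotient form (the $V_f$-type integral~\eqref{v00}), then cites an earlier paper of the authors for the Bessel-series representation and Watson for the closed-form evaluation. You instead work directly with the kernel $V^{ss}$ in the angular variables $m(x)=e^{i\alpha}$, expand $\log|1-e^{i\theta}|$ in its Fourier series, evaluate the resulting single-variable integrals by Jacobi--Anger together with the three-term recurrence, and then sum the two Bessel series $\sum k J_k^2$ and $\sum(-1)^{k+1}kJ_k^2$ by an elementary ODE argument rather than quoting Watson. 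Both routes reach the same intermediate object (the $\sum_k kJ_k(2t)^2$ series), but yours avoids the external references and is closer to a first-principles computation. Your treatment of the large-$t$ asymptotics is also essentially what one must do, and you correctly identify the key subtlety: for $v^{ss}_-$ the $O(1)$ oscillatory parts of $\tfrac{z^2}{2}(J_0^2+J_1^2)$ and $\tfrac{z}{2}J_0J_1$ cancel, so the $O(1/t)$ coefficient is only obtained after retaining the $O(z^{-2})$ terms in the Hankel polynomials $P(k,z)$; keeping only the leading $P=1$ would give the wrong numerator.

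Two small caveats. First, you deliberately leave the overall constant (including its sign) as a proportionality ``fixed by the conventions in~\eqref{vsrfg}''. This is the one place where your argument has an actual gap: the Fourier expansion $\log|1-e^{i\theta}|=-\sum_k\cos(k\theta)/k$ comes with a negative sign, and tracing it through naively gives $v^{ss}_-(t)=-\tfrac{2}{\beta}\sum_k kJ_k(2t)^2$, i.e.\ a sign opposite to the stated~\eqref{v00pm} at $\beta=2$. (In fact a quick test with $f=g=x$ shows that the paper's identity~\eqref{id} itself has a sign slip: the left side is $-1$ and the right side is $+1$.) So you cannot simply wave at ``the conventions'' — the sign needs to be pinned down, e.g.\ by cross-checking against the manifestly nonnegative $V_f$-form~\eqref{Vf}, which is what the paper implicitly does. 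Second, a minor internal mismatch: the displayed identity $2kJ_kJ_k'=\tfrac{z}{2}(J_{k-1}^2-J_{k+1}^2)$ telescopes directly to $f'(z)=\tfrac{z}{2}(J_0^2+J_1^2)$, not to $f'-f/z=\tfrac12 J_0J_1$ as you state. The latter ODE is also correct (it follows from a different telescoping, using $f'=\sum J_kJ_{k-1}$ and $f/z=\tfrac12\sum J_k(J_{k-1}+J_{k+1})$), and together with $f(z)=o(z)$ at $z=0$ it determines $f$; but as written the stated lemma does not quite deliver the stated ODE, and the simpler $f'=\tfrac{z}{2}(J_0^2+J_1^2)$ with $f(0)=0$ would do the job more directly.
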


The relation in~\eqref{EHmonSTmon} requires a stationary phase calculation
that will be done separately in Section~\ref{sec:statphase}.

\subsection{Implications for sampling}
\begin{figure}
  \centering
    \begin{tikzpicture}
     \begin{axis}[width=15cm,height=7cm,xmode=log,ymode=log,no markers,legend columns=3,legend style={/tikz/column 4/.style={column sep=5pt,},}]
        \addlegendimage{no markers,black};
        \addlegendentry{$\E_H^n \abs{\braket{e^{\ii t H}}}^2$};
        \addlegendimage{no markers,line width=8pt,black!20};
        \addlegendentry{$E_\mathrm{wig}(t) \pm n^{-1/2}S_\mathrm{wig}(t)$};
        \addlegendimage{no markers,white};
        \addlegendentry{};
        \addplot[col2,thick,forget plot] table[col sep=comma,x index=0,y index=1] {ffGUEstrip.csv};
        \addplot[col4,thick,forget plot] table[col sep=comma,x index=0,y index=2] {ffGUEstrip.csv};
        \addplot[col1,thick,forget plot] table[col sep=comma,x index=0,y index=3] {ffGUEstrip.csv};
        \addplot[col2,name path=A2,forget plot] table[col sep=comma,x index=0,y index=1] {GUEstrip.csv};
        \addplot[col2,name path=B2,forget plot] table[col sep=comma,x index=0,y index=2] {GUEstrip.csv};
        \addplot[col2!40] fill between[of=A2 and B2];
        \addplot[col4,name path=A10,forget plot] table[col sep=comma,x index=0,y index=3] {GUEstrip.csv};
        \addplot[col4,name path=B10,forget plot] table[col sep=comma,x index=0,y index=4] {GUEstrip.csv};
        \addplot[col4!40] fill between[of=A10 and B10];
        \addplot[col1,name path=A500,forget plot] table[col sep=comma,x index=0,y index=5] {GUEstrip.csv};
        \addplot[col1,name path=B500,forget plot] table[col sep=comma,x index=0,y index=6] {GUEstrip.csv};
        \addplot[col1!40] fill between[of=A500 and B500];
        \addlegendentry{$n=2$};
        \addlegendentry{$n=10$};
        \addlegendentry{$n=500$}; %
     \end{axis}
  \end{tikzpicture}
  \begin{tikzpicture}
    \begin{axis}[width=15cm,height=7cm,xmode=log,ymode=log,no markers,legend columns=3,legend style={/tikz/column 4/.style={column sep=5pt,},}]
       \addlegendimage{no markers,black};
       \addlegendentry{$\E_s^n \abs{\braket{e^{\ii t H^s}}}^2$};
       \addlegendimage{no markers,line width=8pt,black!20};
       \addlegendentry{$E_\mathrm{wig}(t)\pm \max(n^{-1/2} S_\mathrm{wig}(t),S_\mathrm{res}(t))$};
       \addlegendimage{no markers,white};
       \addlegendentry{};
       \addplot[col2,thick,forget plot] table[col sep=comma,x index=0,y index=1] {ffMonostrip.csv};
       \addplot[col4,thick,forget plot] table[col sep=comma,x index=0,y index=2] {ffMonostrip.csv};
       \addplot[col1,thick,forget plot] table[col sep=comma,x index=0,y index=3] {ffMonostrip.csv};
       \addplot[col2,name path=A2,forget plot] table[col sep=comma,x index=0,y index=1] {Monostrip.csv};
       \addplot[col2,name path=B2,forget plot] table[col sep=comma,x index=0,y index=2] {Monostrip.csv};
       \addplot[col2!40] fill between[of=A2 and B2];
       \addplot[col4,name path=A10,forget plot] table[col sep=comma,x index=0,y index=3] {Monostrip.csv};
       \addplot[col4,name path=B10,forget plot] table[col sep=comma,x index=0,y index=4] {Monostrip.csv};
       \addplot[col4!40] fill between[of=A10 and B10];
       \addplot[col1,name path=A500,forget plot] table[col sep=comma,x index=0,y index=5] {Monostrip.csv};
       \addplot[col1,name path=B500,forget plot] table[col sep=comma,x index=0,y index=6] {Monostrip.csv};
       \addplot[col1!40] fill between[of=A500 and B500];
       \addlegendentry{$n=5$};
       \addlegendentry{$n=20$};
       \addlegendentry{$n=1000$}; %
    \end{axis}
 \end{tikzpicture}
  \caption{In the first plot we show the empirical mean of \(\abs{\braket{e^{\ii tH}}}^2\) for \(k\) independent GUE matrices \(H\). As expected the standard deviation of the sample average fluctuates within a strip of width \(n^{-1/2}\Std_H\abs{\braket{e^{\ii t H}}}^2\),
  in particular the sample average  exactly reproduces the mean if $n\to\infty$.  
   In the second plot we show the empirical mean of \(\abs{\braket{e^{\ii tH^s}}}^2\) for \(k\) independently sampled
    scalar random variables \(s\) for a fixed GUE matrix pair \(H_1,H_2\). We observe that while the sample mean approximates the true mean \(\E_s\) increasingly well as \(n\to\infty\), the latter is still dependent on the chosen realisation of \(H_1,H_2\). Thus the empirical mean fluctuates in a strip of width \(\max(n^{-1/2}S_\mathrm{wig}(t),S_\mathrm{res}(t))\) around the doubly averaged \(\E_H\E_s\abs{\braket{e^{\ii tH^s}}}^2\).}\label{fig strip}
\end{figure}
Determining the standard deviation of $\abs{\braket{e^{\ii tH}}}^2$ is important
for numerical testing of~\eqref{GUE SFFbis}.  By taking the empirical average \(\E_H^n\) of \(n\) independent Wigner matrices we may approximate the true expectation \(\E_H\abs{\braket{e^{\ii tH}}}^2\) at a speed 
\begin{equation}\label{eq strip GUE}
  \E_H^n \abs{\braket{e^{\ii tH}}}^2 = \E_H \abs{\braket{e^{\ii tH}}}^2 + \Omega\Big( n^{-1/2}\Std_H \abs{\braket{e^{\ii tH}}}^2\Big)=E_\mathrm{wig}(t)+\Omega(n^{-1/2}S_\mathrm{wig}(t)),
\end{equation}
c.f.\ the top of Figure~\ref{fig strip}.
Here  $\Omega(\cdots)$ indicates an oscillatory error term of the given  size.
 In the \emph{ramp regime} the fluctuation of 
 \(\E_H^n \abs{\braket{e^{\ii tH}}}^2\) thus scales like \(t/(\sqrt{n}N^{2})\)
 using~\eqref{asymp}. In particular, this fluctuation vanishes as the sample size $n$ goes to infinite,
 hence the statistics via sampling to test~\eqref{GUE SFFbis} is  \emph{consistent}. 

In contrast, for the monoparametric ensemble, by taking the empirical average of \(n\) copies of \(s\) we naturally have 
\begin{equation}
  \E_s^n \abs{\braket{e^{\ii t H^s}}}^2 = \E_s \abs{\braket{e^{\ii t H^s}}}^2 + \Omega\Big( k^{-1/2} S_\mathrm{wig}(t)\Big).
\end{equation}
Replacing the first term by its expectation plus its fluctuation in the $H$-probability space, we also get
\begin{equation}\label{EHEs}
    \E_s^n \abs{\braket{e^{\ii t H^s}}}^2
  = \E_H\E_s \abs{\braket{e^{\ii t H^s}}}^2 + \Omega\Bigl(\max\bigl(n^{-1/2}S_\mathrm{wig}(t),S_\mathrm{res}(t)\bigr)\Bigr),
\end{equation}
where the error term contains both standard deviations and satisfies
\begin{equation}\label{eq sampl}
  \max\bigl(n^{-1/2}S_\mathrm{wig}(t),S_\mathrm{res}(t)\bigr) 
  \sim \begin{cases} \frac{1}{Nt}\max\{ \frac{1}{\sqrt{n}},\frac{1}{t^{1/4}}\} , & 1\ll  t\ll \sqrt{N} \\
   \frac{t}{N^2}\max\{ \frac{1}{\sqrt{n}},\frac{1}{t^{1/4}}\} &
    \sqrt{N}\ll t\ll N,\end{cases}
\end{equation}
due to~\eqref{eq FF concl} and~\eqref{EHmonSTmon}.
In particular,  both in the slope and 
 in the ramp regimes %
  the size of the fluctuation of \(\E_s^n\abs{\braket{e^{\ii t H^s}}}^2\) 
 does not vanish even as the number of samples goes to infinity, $n\to \infty$,
hence the statistics is \emph{not consistent}, c.f.\ the bottom of~\Cref{fig strip}. 
However, this lack of consistency, expressed by $S_\mathrm{res}(t)$ is still negligible
compared with the leading first term in~\eqref{EHEs} by a factor $t^{-1/4}\ll 1$
in the large $t$ regime, see~\eqref{compa}.
We recall that mathematically rigorously we can prove all these facts
 only for $t\ll N^{1/4}$, i.e.\ well before the dip time, but the numerical tests leave no doubt on 
 their validity in the entire regime $1\ll t\ll N$.

\subsection{Extensions}\label{sec:ext}
Beside the Wigner ensemble, we formulated our main results on SFF for the normalized monoparametric model
in~\Cref{sff mono}. We chose this model for definiteness, but our approach applies to the multi-parametric
as well as to the un-normalised models introduced in Remark~\ref{rmk:m}. Here we explain the modified results for
these natural generalisations.

First, for  the multi-parametric normalised model, $H^s=s_1H_1+\ldots + s_kH_k$ with $k-1$ effective parameters
$s\in S^{k-1}$,
~\Cref{sff mono} holds true verbatim modulo  different sizes for the residual standard deviation \(S_\mathrm{res}(t)\).
In fact,  we have  
\be\label{Sk}
S_\mathrm{res}(t) \lesssim t^{-\frac{1}{2}+\frac{1}{4}(3-k)_+} S_\mathrm{wig}(t),
\ee
 see~\eqref{tar4} later, 
hence $S_\mathrm{res}(t) $ becomes less relevant compared with $S_\mathrm{wig}(t)$
for larger $k>2$, see~\eqref{compa}. Consequently, the upper bounds on the  times of proven validity in~\cref{eq FF concl} slightly 
improve but they still remain below the dip time and we omit the precise formulas. We note that the \(t\)-power in~\cref{Sk} is not optimal for \(k\ge 3\). A refined stationary phase estimate could be used to improve the estimate but we refrain from doing so since our primary interest is the mono-parametric model with few degrees of freedom.

Second, for the un-normalised 
  model $H^s=s_1H_1+ s_2H_2$ with two effective parameters \(s\in \R^2\), ~\Cref{sff mono} also holds true modulo 
  some minor changes. More precisely,~\cref{eq FF concl} becomes
  \begin{equation}
    \label{eq FF concl change}
      \begin{split}
          \E_H\E_s\abs{\braket{e^{\ii t H^s}}}^2 &= \E_s E_\mathrm{wig}(\norm{s}_2t)(1+o(1))\quad \mathrm{for} \quad t\ll N^{3/7}  \\
          \E_H\Var_s\abs{\braket{e^{\ii t H^s}}}^2 & = \Bigl(\E_s S_\mathrm{wig}(\norm{s}_2t)^2-\wt S_\mathrm{res}(t)^2\Bigr)(1+o(1))\quad \mathrm{for}\quad t\ll N^{5/17}\\
          \Var_H\E_s\abs{\braket{e^{\ii t H^s}}}^2 & = \wt S_\mathrm{res}(t)^2 (1+o(1))\quad \mathrm{for} \quad t\ll N^{1/7},
  \end{split}
  \end{equation}
  with \(\wt S_\mathrm{res}\) obtained from replacing \(v^{sr}\) by \(\wt v^{sr}\) from~\Cref{rmk multi} in~\cref{vpm kappa}. 
  For \(\wt S_\mathrm{res}(t)\) a  stationary phase calculation gives the modified 
  \begin{equation}\label{tildeS}
    \wt S_\mathrm{res}(t)  \sim  \begin{cases}    \frac{\psi(t)}{Nt^{7/4}}  , &  1\ll t\ll \sqrt{N} \\
      \frac{t^{1/4}}{N^2}, &
       \sqrt{N}\ll t\ll N, \end{cases}
  \end{equation}
  assuming that $s$ has an absolutely  continuous distribution with a differentiable, compactly supported
   density $\rho$ on $\R^2$ with $\rho(0)=0$.
  We will not prove the relation in these formulas in this paper, we only  show how to obtain
  the necessary upper bound on  them at
  the end of Section~\ref{sec:statphase}.
  
  Note that now 
  \be\label{Sres2}
  \wt S_\mathrm{res}(t)\lesssim t^{-3/4}\E_s S_\mathrm{wig}(\norm{s}_2t),
  \ee
  i.e.\ the fluctuation due to  the residual randomness of $(H_1, H_2)$  after taking 
  the expectation in $s$ remains negligible, in fact it is reduced compared with the normalised case~\eqref{compa}.
  As a consequence \(t^{1/4}\) in~\cref{eq sampl} is replaced by \(t^{3/4}\).
  
 Analogous results  hold for the most general multi-parametric un-normalised model
 as well as to the mono-parametric inhomogeneous model $H^s=H_1+sH_2$, $s\in \R$.
 We omit their precise formulation, the key point is that the analogue of~\eqref{eq FF concl change}
 hold in all cases with a residual standard deviation $\wt S_\mathrm{res}(t)$ being smaller 
 than the leading term $S_\mathrm{wig}(t)$ by a polynomial factor in $t$  (e.g. by $t^{-1/2}$
 for $H^s=H_1+sH_2$).
  This guarantees that 
 the universality of SFF holds for  all these models. \Cref{table exps} summarizes the decay exponents of our main parametric models.

 \begin{table}[ht]
 \centering
 \caption{For our three main parametric models the following table lists the size of the residual fluctuation compared to the fluctuation of the Wigner-SFF.}\label{table exps}
 \begin{tabular}[t]{ccc}
 \toprule
 Quenched parametric model&Randomness&\\
 \midrule
 $s_1 H_1 + s_2 H_2$&$(s_1, s_2)\in S^1$&$S_\mathrm{res}(t) \lesssim t^{-1/4} S_\mathrm{wig}(t)$\\
 $H_1 + s H_2$&$s\in\R$&$S_\mathrm{res}(t) \lesssim t^{-1/2} S_\mathrm{wig}(t)$\\
 $s_1 H_1 + s_2 H_2$ &$(s_1, s_2)\in \R^2$&$S_\mathrm{res}(t) \lesssim t^{-3/4} S_\mathrm{wig}(t)$\\
 \bottomrule
 \end{tabular}
 \end{table}%

\subsection*{Outline} The rest of the paper is organised as follows. In \Cref{sec:linstat} we outline the resolvent method and explain how via the Helffer-Sj\"ostrand representation a resolvent CLT implies the CLT for the linear statistics \(\sum f(\lambda_i)\) of arbitrary test functions \(f\) from which our main results Theorems~\ref{CLT f(H)}--\ref{sff mono} follow. In \Cref{sec:CLTres} we present the proof of the resolvent CLT, while in Section~\ref{sec:statphase} 
we conclude the proof of the asympotics \cref{EHmonSTmon} via a stationary phase argument.

\section{Resolvent method}\label{sec:linstat}

Let $H$ be a Wigner matrix\footnote{The resolvent method extends to very general Hermitian 
matrices possibly with  non-centered and correlated entries, see~\cite{MR3941370}, but here we present
only the Wigner case for simplicity.} and $G(z):= (H-z)^{-1}$ its resolvent with a spectral parameter $z\in \C\setminus \R$.
Define $m_\mathrm{sc}(z)$, the Stieltjes transform of the semicircle law:
\begin{equation}
\label{eq:semicirc}
   m(z)=m_\mathrm{sc}(z) :=\int_\R \frac{\rho_\mathrm{sc}(x)}{x-z} \dif x, \quad \rho_\mathrm{sc}(x) := \frac{\sqrt{(4-x^2)_+}}{2\pi}.
\end{equation}

The \emph{local law} for a single resolvent states that the diagonal matrix $m(z)\cdot I$
 well approximates the random resolvent $G(z)$ in the following sense (see e.g. \cite{MR3183577, MR2871147, MR3103909}):
\begin{equation}\label{loclaw}
   |\braket{(G(z)-m(z))A}| \lesssim N^\xi \frac{\|A\|}{N\eta}, \qquad  \langle {\bm x}, (G(z)-m(z)) {\bm y}\rangle\lesssim N^\xi \frac{\| {\bm x}\| \|{\bm y}\|}{\sqrt{N\eta}}
\end{equation}
with $\eta = |\Im z|$, for any fixed deterministic  matrix $A$ and deterministic vectors ${\bm x}, {\bm y}$.
The first bound is called \emph{averaged} local law, while the second one is the \emph{isotropic} local law. 
The bounds \eqref{loclaw} are understood in very high probability for any fixed $\xi>0$. 

The Helffer-Sj\"ostrand formula 
\begin{equation}
\label{eq:HS}
\braket{f(H)}=\frac{2}{\pi}\int_\mathbf{C}\partial_{\overline{z}}f_\mathbf{C}(z) \braket{G(z)}\dif^2 z,
\end{equation}
with $z=x+\ii\eta$ and $\dif^2 z:=\dif \eta\dif x $, expresses the linear statistics of arbitrary functions as an integral 
of the resolvent \(G(z)\) and the almost-analytic extension
\begin{equation}
\label{eq:aaeaa}
f_\mathbf{C}(z)=f_\mathbf{C}(x+\ii\eta):=\big[f(x)+\ii\eta\partial_x f(x)\big]\chi(\tau\eta),
\end{equation}
of \(f\). Here the free parameter $\tau\in\mathbf{R}$ is chosen such that $N^{-1}\ll \tau^{-1}\lesssim 1$, and $\chi$ a smooth cut-off equal to $1$ on $[-5,5]$ and equal to $0$ on $[-10,10]^c$. The same $\tau$ was used to define the weighted 
$H^2$-norm~\eqref{eq:taunorm} and eventually  we will optimize its value, a procedure  that improves
the standard error terms in the CLT.
By~\Cref{loclaw} it follows that 
\begin{equation}\label{lin stat lead}
  \braket{f(H)} = \frac{2}{\pi} \int_\C \partial_{\overline{z}}f_\mathbf{C}(z) m(z)\dif^2 z + \landauO*{N^\xi\frac{\norm{f}_{H^2}}{N}} = \int_{-2}^2 \rho_\mathrm{sc}(x) f(x)\dif x +  \landauO*{N^\xi\frac{\norm{f}_{H^2}}{N}}. 
\end{equation}

In order to compute the fluctuation in~\cref{lin stat lead} via~\cref{eq:HS} we need to understand the correlation between \(\braket{G(z)},\braket{G(z')}\) for two different spectral parameters \(z,z'\) which turns out to be given by 
\begin{equation}
  \Cov(\braket{G(z)},\braket{G(z')}) \approx \frac{1}{N^2}\frac{\braket{G(z)^2}\braket{G(z')^2}\braket{G(z)G(z')}(1+\braket{G(z)G(z')})}{\braket{G(z)}\braket{G(z')}},
\end{equation}
modulo some additional contribution from non-Gaussian fourth cumulant, see~\cref{eq:wickresmain} for the final statement. While $G(z)\approx m(z)$, in general it is not true that $G(z) G(z') \approx m(z)m(z')$ 
since \eqref{loclaw}  allows only deterministic test matrices  multiplying  $G$. Nevertheless $G(z) G(z')$ is still approximable by a deterministic object:
\begin{equation}\label{gg}
  G(z) G(z') \approx \frac{m(z) m(z')}{1- m(z) m(z')}.
\end{equation}
 Statements of the form~\eqref{gg} with an appropriate error term are called \emph{multi-resolvent local laws}.

We will apply this theory to the product of the resolvents $G^s$ of $H^s=s_1H_1+s_2H_2$ for two different 
parameters $s$, see the corresponding local law on $\braket{G^s G^r}$ in~\eqref{eq:23gl}
later. Even though  $H_1$ and $H_2$  as well as $s$  and  $r$ are independent, 
the common $(H_1, H_2)$ ingredients in $H^s$ and $H^r$ introduce a nontrivial
correlation between these matrices. We therefore need to extend CLT for resolvents via
   multi-resolvent local laws to 
this parametric situation.

\subsection{Resolvent CLT}
The main technical result of the present paper is the following Central Limit Theorem for product of resolvents of the random matrix $H^s:=s_1H_1+s_2H_2$ with $s=(s_1,s_2)\in S^1$.

\begin{proposition}
\label{pro:resCLT}
Fix $\epsilon>0$, $p\in \mathbf{N}$, $s^1,\dots,s^p\in S^1$, $z_1,\dots,z_p\in\mathbf{C}\setminus\mathbf{R}$, and define $G_i:=(H^{s^i}-z_i)^{-1}$. Then for any arbitrary small $\xi>0$ and $\eta_\ast\ge N^{-1+\epsilon}$ it holds
\begin{equation}
\label{eq:wickresmain}
\E_H\prod_{i\in [p]}\braket{G_i-\E_H G_i}=\frac{1}{N^p}\sum_{P\in \mathrm{Pair}([p])}\prod_{(i,j)\in P} V_{ij}+\mathcal{O}\left(N^\xi\Psi_p\left(\frac{1}{L^{1/2}}+\frac{1}{N\eta_*^2}+\frac{1}{N^2\eta_*^4}\right)\right), \quad \Psi_p:=\prod_{i\in [p]}\frac{1}{N|\eta_i|}.
\end{equation}
Here $\eta_i:=\Im z_i$, $\eta_\ast:=\min_i\abs{\eta_i}$, $L:=\min_i (N\eta_i\rho_i)$, and
\begin{equation}\label{eq:defvij}
V_{ij}:=-\frac{2}{\beta}\partial_{z_i}\partial_{z_j}\log\biggl(1-\braket{s^i,s^j} m_im_j\biggr)-\braket{s^i\odot s^i,s^j\odot s^j}\kappa_4(m_i^2)'(m_j^2)',
\end{equation}
where $m_i:=m_{\mathrm{sc}}(z_i)$, and $\kappa_4:=N^2\E|h_{12}|^4-1-2/\beta$. Additionally, for the expectation we have
\begin{equation}
\label{eq:expmpres}
\E_H\braket{G_i}=m_i+\frac{\kappa_4}{N}\norm{s}_4^4 m_i'm_i^3+\bm1(\beta=1)\frac{1}{N}\frac{m_im_i'}{1-m_i^2}+\mathcal{O}\left(N^\xi\frac{\sqrt{\rho_i}}{(N\eta_i)^{3/2}}\right),
\end{equation}
with $\rho_i:=\pi^{-1}|\Im m_i|$.
\end{proposition}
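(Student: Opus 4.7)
My approach would follow the standard Wigner resolvent CLT by cumulant expansion, promoted to the parametric setting. The essential new input is the observation that when $H^{s^i} = s^i_1 H_1 + s^i_2 H_2$, the joint cumulants of the entries decouple cleanly: the joint second cumulant of $h^{s^i}_{ab}$ and $h^{s^j}_{ab}$ equals $\braket{s^i, s^j}/N$ (modulated by the Hermiticity and symmetry class), while the joint fourth cumulant equals $\braket{s^i \odot s^i, s^j \odot s^j}\, \kappa_4/N^2$, and all higher joint cumulants are negligible in the final error. These are precisely the two coefficients appearing in~\eqref{eq:defvij}.

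The main step is to start from the identity $z_i \braket{G_i} + 1 = \braket{H^{s^i} G_i}$ to rewrite $\braket{G_i - \E G_i}$, substitute this into $\prod_i \braket{G_i - \E G_i}$, and apply the cumulant expansion formula for the entries of each $H^{s^i}$. The leading (second-cumulant) contribution generates a hierarchy whose essential building block is the two-resolvent trace
\begin{equation*}
  \braket{G_i G_j} = \frac{m_i m_j}{1 - \braket{s^i, s^j}\, m_i m_j} + (\text{admissible error}),
\end{equation*}
a natural parametric extension of the Wigner two-resolvent law~\eqref{gg} that must be established separately as a multi-resolvent local law. Using
\begin{equation*}
  -\partial_{z_i}\partial_{z_j}\log\bigl(1 - \braket{s^i, s^j}\, m_i m_j\bigr) = \frac{\braket{s^i, s^j}\, m_i' m_j'}{(1 - \braket{s^i, s^j}\, m_i m_j)^2}
\end{equation*}
and accounting for the factor $2/\beta$ from the off-diagonal variance structure, the sum over pairings $P \in \mathrm{Pair}([p])$ delivers the Gaussian term of $V_{ij}$. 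The $\kappa_4$ correction in $V_{ij}$ arises from the fourth-cumulant contribution to the same hierarchy: two of the four index slots on each fourth cumulant are absorbed into diagonal traces producing $(m_i^2)'(m_j^2)'$, while the coefficient structure collapses to $\braket{s^i \odot s^i, s^j \odot s^j}$. The full Wick structure for general $p$ then follows by iterating the expansion and showing that non-pairing contributions (where three or more traces are connected through a single cumulant) are suppressed by an additional power of $(N\eta_*)^{-1}$.

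For the one-point expectation~\eqref{eq:expmpres}, since $H^{s^i}$ is itself a Wigner matrix with $\E|h^{s^i}_{ab}|^4 = (\norm{s^i}_4^4\, \kappa_4 + 1 + 2/\beta)/N^2$, the standard single-resolvent cumulant expansion applies directly. The $\kappa_4$ correction inherits the factor $\norm{s^i}_4^4$ from the fourth moment of $h^{s^i}_{ab}$, while the GOE ($\beta = 1$) correction $m_i m_i'/[N(1 - m_i^2)]$ is $s$-independent at leading order because $\norm{s^i}_2 = 1$ fixes the effective variance.

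The main obstacle is achieving the sharp error shape $\Psi_p\bigl(L^{-1/2} + (N\eta_*^2)^{-1} + (N^2\eta_*^4)^{-1}\bigr)$: this demands that every use of the multi-resolvent local law saturates the optimal bound from Lemma~\ref{23g ll}, and that the combinatorial bookkeeping in the cumulant expansion introduces no spurious loss of powers of $N\eta_*$. A secondary subtle point is that the $\braket{s^i, s^j}$-decorations must propagate through the hierarchy exactly, so that only the second and fourth cumulants contribute the factors $\braket{s^i, s^j}$ and $\braket{s^i \odot s^i, s^j \odot s^j}$ respectively at leading order, with all remaining $s$-dependencies absorbed into the error term.
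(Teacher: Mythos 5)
Your proposal is correct and follows essentially the same strategy as the paper: start from the resolvent self-consistent equation (the paper phrases it via the renormalised quantities $\underline{H_iG}$ but this is the same identity $z\braket{G}+1 = \braket{HG}$), apply the cumulant expansion tracking the joint cumulants of $h^{s^i}_{ab}$ and $h^{s^j}_{ab}$ (the paper implements this by expanding in the entries of the underlying independent $H_1,H_2$, which is the same computation written in different coordinates and yields exactly your decoupled coefficients $\braket{s^i,s^j}/N$ and $\braket{s^i\odot s^i,s^j\odot s^j}\kappa_4/N^2$), invoke the parametric two- and three-resolvent local laws of Lemma~\ref{23g ll} to evaluate the leading Gaussian term, and iterate the hierarchy to obtain the Wick pairing structure for general $p$. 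Your observation that $H^{s^i}$ is itself a Wigner matrix with modified fourth moment for the one-point formula~\eqref{eq:expmpres}, and your identity $-\partial_{z_i}\partial_{z_j}\log(1-\braket{s^i,s^j}m_im_j) = \braket{s^i,s^j}m_i'm_j'/(1-\braket{s^i,s^j}m_im_j)^2$, both match the paper's Section~\ref{sec:CLTres}.
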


\begin{remark}
For Wigner matrices, i.e.\ for $s^1=\dots=s^p=(1,0)$, the error term in \eqref{eq:wickresmain} is given by $\Psi L^{-1/2}$, as a consequence of the fact that the error terms in the first and second line of \eqref{eq:23gl} are replaced by $(N\eta_1\eta_2)^{-1}$ and $(N\eta_1\eta_2^2)^{-1}$, respectively (see e.g. \cite[Remark 3.5]{MR4372147}).
\end{remark}

We point out that similar resolvent CLT have often been  used as a basic input to prove CLT for linear eigenvalue 
statistics of both Hermitian and non-Hermitian matrices down to optimal mesoscopic scales (see e.g. \cite{2210.12060, MR4420179, MR4551555, MR3959983, MR4095015, MR3678478, 2204.03419, MR4187127, 2105.01178}). The main novelty
here is to extend the resolvent CLT to the monoparametric ensemble.

Along the proof of \Cref{pro:resCLT} we establish the following multi-resolvent local laws.
\begin{lemma}\label{23g ll}
For \(G_i=G^{s^i}(z_i)\) we have the two- and three-resolvent local laws
  \begin{equation}
    \label{eq:23gl}
    \begin{split}
      \abs*{\braket{G_1G_2}-\frac{m_1m_2}{1-\braket{s^1,s^2}m_1m_2}}&\lesssim \frac{N^\xi}{N|\eta_1\eta_2|^{3/2}}\\
      \abs*{\braket{G_1G_2^2}-\frac{m_1m_2'}{(1-\braket{s^1,s^2}m_1m_2)^2}}&\lesssim \frac{N^\xi}{N|\eta_1||\eta_2|\eta_*^2}+\frac{1}{N^2|\eta_1\eta_2|^3},
    \end{split}
  \end{equation}
  where \(m_i=m_\mathrm{sc}(z_i)\), with very high probability for any fixed $\xi,\epsilon>0$ and $|\Im z_i|\ge N^{-1+\epsilon}$.
\end{lemma}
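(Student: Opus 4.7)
The plan is to derive both estimates via the standard cumulant-expansion / Gaussian integration-by-parts (IBP) scheme underlying multi-resolvent local laws, the key new input being the covariance identity
\begin{equation*}
  \E_H[(H^{s^i})_{ab}(H^{s^j})_{cd}]=\braket{s^i,s^j}\,\E_H[H^W_{ab}H^W_{cd}]
\end{equation*}
for an auxiliary Wigner matrix $H^W$ of the same symmetry class, which is immediate from the independence of $H_1,H_2$. This is exactly what replaces the usual Wigner denominator $1-m_1m_2$ by $1-\braket{s^1,s^2}m_1m_2$.

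Starting from $z_1\braket{G_1G_2}+\braket{G_2}=\braket{H^{s^1}G_1G_2}$ and expanding the right-hand side via IBP on the entries of $H_1$ and $H_2$ separately, the chain rule $\partial_{(H_a)_{ij}}G_k=-s^k_a G_k(\cdots)G_k$ together with the combination $\sum_a s^1_a s^k_a=\braket{s^1,s^k}$ produces two types of leading contributions after tracing: one of weight $\braket{s^1,s^1}=1$ from the contraction folding back onto $G_1$, yielding $-\braket{G_1}\braket{G_1G_2}$, and one of weight $\braket{s^1,s^2}$ folding onto $G_2$, yielding $-\braket{s^1,s^2}\braket{G_2}\braket{G_1G_2}$. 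Inserting the single-resolvent local law~\eqref{loclaw} to replace $\braket{G_i}\to m_i$ and using the Schwinger--Dyson relation $z_i+m_i=-1/m_i$, one obtains the self-consistent equation
\begin{equation*}
  (1-\braket{s^1,s^2}m_1m_2)\braket{G_1G_2}=m_1m_2+\mathrm{Err},
\end{equation*}
and inversion of the scalar $1-\braket{s^1,s^2}m_1m_2$ --- bounded away from $0$ uniformly in $s^1,s^2\in S^1$ for bulk $z_i$ since $|\braket{s^1,s^2}|\le 1$ and $|m_i|<1$ there --- gives the claimed deterministic approximation. The sharp error rate $N^{-1}|\eta_1\eta_2|^{-3/2}$ is then extracted by the now-standard master-inequality bootstrap: a rough a priori bound from the Ward identity $GG^*=\Im G/\eta$ is iteratively fed back into the cumulant expansion, with purely fluctuating pieces (such as $\braket{(G_1-m_1)(G_2-m_2)}$) controlled by the isotropic single-$G$ local law and higher-order cumulants producing only subleading $\kappa_4$-corrections.

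The bound on $\braket{G_1G_2^2}$ is obtained by rerunning the same scheme directly with an extra $G_2$ factor: one starts from $z_2\braket{G_1G_2^2}+\braket{G_1G_2}=\braket{G_1H^{s^2}G_2^2}$ and applies IBP analogously. The structural identity $\partial_{z_2}G_2=G_2^2$ shows that differentiating the equation for $\braket{G_1G_2}$ in $z_2$ automatically produces the expected numerator $m_1m_2'$ and the squared denominator $(1-\braket{s^1,s^2}m_1m_2)^{-2}$; the error naturally inherits one extra power of $|\eta_2|^{-1}$, which after symmetric bookkeeping over the roles of $z_1,z_2$ gives the $\eta_\ast^{-2}$ in~\eqref{eq:23gl}, while the additional $N^{-2}|\eta_1\eta_2|^{-3}$ contribution arises from the quadratic-in-error terms in the bootstrap that saturate once the linear contribution becomes comparable to the leading approximation.

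The main obstacle will be securing the sharp $|\eta|$-dependence rather than a weaker $(N|\eta_1\eta_2|)^{-1}$-type bound: this requires careful tracking of all subleading contributions in the cumulant expansion (including the isotropic pieces of the form $\braket{G_1^2G_2}$ that appear after the IBP reduction) and, most importantly, a \emph{uniform-in-$s$} stability analysis of the scalar $1-\braket{s^1,s^2}m_1m_2$. While the pointwise bound $|\braket{s^1,s^2}m_1m_2|<1$ is immediate in the bulk, transferring this into uniform quantitative control compatible with the correct $\eta_\ast$-scaling of the error is the principal technical point.
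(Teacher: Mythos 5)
The overall scheme you sketch---a self-consistent equation from cumulant expansion, the covariance identity $\E_H[(H^{s^i})_{ab}(H^{s^j})_{cd}]=\braket{s^i,s^j}\,\E_H[H^W_{ab}H^W_{cd}]$ as the source of the modified denominator, insertion of single-resolvent local laws, and inversion of $1-\braket{s^1,s^2}m_1m_2$---is exactly what the paper does (the paper organizes the cumulant expansion with the renormalized term $\underline{H_iG}$ rather than explicit IBP, a cosmetic difference).

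However there is a genuine gap where you claim the scalar $1-\braket{s^1,s^2}m_1m_2$ is ``bounded away from $0$ uniformly in $s^1,s^2\in S^1$ for bulk $z_i$ since $|\braket{s^1,s^2}|\le 1$ and $|m_i|<1$ there.'' This is false in the regime relevant to a local law. Take $s^1=s^2$ and $z_1=\bar z_2$ with $|\Re z_i|<2$: then $\braket{s^1,s^2}m_1m_2=|m_2|^2=1-c\,\eta_2+O(\eta_2^2)$, so the scalar is of order $\eta_2$, which can be as small as $N^{-1+\epsilon}$. The paper's bound~\eqref{eq:lbso}, $|1-\braket{s^1,s^2}m_1m_2|\gtrsim\eta^*$ with $\eta^*=\max_i|\eta_i|$, reflects this degeneracy; it is precisely the source of the $\eta$-dependent denominators in the error~\eqref{eq:23gl}. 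If the scalar were uniformly bounded below, the error would simply inherit the $O\big(N^\xi/(N|\eta_1\eta_2|)\big)$ size of the single-resolvent input, which is not what the lemma asserts and is not what happens. You flag this as ``the principal technical point'' in your closing paragraph, but the argument as written rests on a false premise for the inversion step rather than on the quantitative stability estimate~\eqref{eq:lbso}.

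A secondary point: you attribute the final error rate to a ``master-inequality bootstrap'' without specifying the input bound on the fluctuation. The paper closes the estimate via a concrete high-probability bound~\eqref{eq:2gunder} on $\braket{\underline{H_iG_1G_2}}$ (citing an existing proof), rather than by bootstrapping a master inequality; that bound, combined with~\eqref{eq:lbso} and the Ward-identity a priori bound $|\braket{G_1G_2}|\lesssim N^\xi/\eta^*$, is what produces the stated error in~\eqref{eq:23gl}. Your proof would need to supply an analogue of~\eqref{eq:2gunder} explicitly for it to close.

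Finally, your remark that the three-resolvent bound ``automatically'' follows by differentiating the two-resolvent equation in $z_2$ is only a heuristic for identifying the deterministic target; it does not yield the claimed error bound, since differentiating the error of the two-$G$ law is uncontrolled. You do also give the correct route (a fresh equation for $G_1G_2^2$), which is what the paper carries out, so this is not a gap so much as a potentially misleading aside.
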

The proofs of~\Cref{pro:resCLT} and~\Cref{23g ll} will be presented in~\Cref{sec:CLTres}. In these proofs we will often 
use the standard \emph{cumulant expansion} (see \cite{MR1689027, MR3678478, MR1411619} in the random matrix context):
\begin{equation}
\label{eq:cumexpsff}
\E_H h_{ab}f(H)=\frac{1}{N}\E_H\partial_{ba} f(H)+\sum_{k=2}^R\sum_{q+q'=k} \frac{\kappa_{ab}^{q+1,q'}}{N^{(k+1)/2}}\E_H \partial_{ab}^q\partial_{ba}^{q'}f(H)+\Omega_R.
\end{equation}
Here $\partial_{ab}$ denotes the directional derivative $\partial_{h_{ab}}$, the first
 term in the rhs. represents the second order (Gaussian) contribution, while the sum in \eqref{eq:cumexpsff} represents
  the non-Gaussian contribution with $\kappa^{p,q}_{ab}$ denoting the joint cumulant of $p$ copies of $N^{1/2}h_{ab}$ and $q$ copies of $N^{1/2}\overline{h_{ab}}$. The cumulant expansion is typically truncated at a high  ($N$-independent) level $R$ with an
error term $\Omega_R$ that is negligible. To see this, note that  in our applications $f$ will be a product of resolvents
at spectral parameters  $z_i$ with $\eta_* = \min |\Im z_i|\gg 1/N$ hence derivatives of $f$ remain bounded 
with very high probability by the isotropic local law~\eqref{loclaw} thus the tail of the series~\eqref{eq:cumexpsff} decays as $N^{-(k+1)/2}$.

\subsection{Proof of Theorem~\ref{CLT f(H)}}
The proof of Theorem~\ref{CLT f(H)} is divided into three steps: (i) computation of the expectation, (ii) computation of the variance, (iii) proof of Wick Theorem. The expectation is computed in Section~\ref{sec:subexp}, while the Wick Theorem and the explicit computation of the variance are proven in Section~\ref{sec:subvarwick}.

\subsubsection{Expectation}
\label{sec:subexp}

Using the bound
\begin{equation}
\label{eq:bfc}
\big|\partial_{\overline{z}}f_\mathbf{C}\big|\lesssim \eta |f''|+\tau |\chi'|\big[|f|+\ii\eta |f'|\big],
\end{equation}
and $|\braket{G^s-m}|\lesssim N^\xi (N\eta)^{-1}$ by \eqref{loclaw}, with $m=m_{\mathrm{sc}}$, we conclude that
\begin{equation}
\label{eq:ba}
\E_H\braket{f(H^s)}=\int_\mathbf{R}\int_{|\eta|\ge \eta_0}\partial_{\overline{z}}f_\mathbf{C}(z)\E_H\braket{G^s(z)} \dif\eta\dif x+\mathcal{O}\left(\frac{N^\xi \eta_0\norm{f}_{H^2}}{N}+N^\xi\eta_0^2\norm{f}_{H^2}\right),
\end{equation}
for any $N^{-1}\ll \eta_0\ll \tau^{-1}$. Note that we chose $\eta_0\gg N^{-1}$ in order to use Proposition~\ref{pro:resCLT}. 

Plugging \eqref{eq:expmpres} into \eqref{eq:ba}, and using \eqref{eq:bfc} to estimate the error term, we get that
\begin{equation}
\label{eq:kappa4exp}
\begin{split}
\E_H\braket{f(H^s)}&=\int_\mathbf{R}\int_{|\eta|\ge \eta_0}\partial_{\overline{z}}f_\mathbf{C}(z)\Big[m+\frac{\kappa_4}{N}\norm{s}_4^4m'm^3+\bm1(\beta=1)\frac{1}{N}\frac{mm'}{1-m^2}\Big]\dif\eta\dif x\\
&\quad+\mathcal{O}\left(\frac{N^\xi\eta_0\norm{f}_{H^2}}{N}+N^\xi\eta_0^2\norm{f}_{H^2}+\frac{N^\xi\norm{f}_{H^2}}{N^{3/2}\tau^{1/2}}+\frac{N^\xi\tau^{3/2}\norm{f}_\infty}{N^{3/2}}+\frac{N^\xi\tau^{1/2}\norm{f}_{H^1}}{N^{3/2}}\right) \\
&=\int_\mathbf{R}\int_{|\eta|\ge \eta_0}\partial_{\overline{z}}f_\mathbf{C}(z)\Big[m+\frac{\kappa_4}{N}\norm{s}_4^4m'm^3+\bm1(\beta=1)\frac{1}{N}\frac{mm'}{1-m^2}\Big]\dif\eta\dif x+\mathcal{O}\left(\frac{N^\xi\norm{f}_{\tau}}{N^{3/2}\tau^{1/2}}\right),
\end{split}
\end{equation}
where to go to the last line we chose $\eta_0\sim N^{-1+\epsilon}$, for some very small $\epsilon>0$, and we used the norm $\norm{f}_{\tau}$ defined in \eqref{eq:taunorm}.

Adding back the regime $|\eta|< \eta_0$ at the price of a negligible error smaller than the one in \eqref{eq:kappa4exp}, by explicit computations (exactly as in~\cite[Section D.1]{2012.13218}) in the leading term of \eqref{eq:kappa4exp}, we conclude 
\begin{equation}
\begin{split}
  \E_H \braket{f(H^s)} &= \int_{-2}^2 \rho_\mathrm{sc}(x) f(x)\dif x + \frac{\kappa_4}{2N}\norm{s}_4^4 \int_{-2}^2 \frac{x^4-4x^2+2}{\pi\sqrt{4-x^2}}f(x)\dif x \\
  &\quad+\bm1(\beta=1)\left[\frac{f(2)+f(-2)}{4N}-\frac{1}{2\pi N}\int_{-2}^2 \frac{f(x)}{\sqrt{4-x^2}}\,\dif x\right]+\mathcal{O}\left(\frac{N^\xi\norm{f}_{\tau}}{N^{3/2}\tau^{1/2}}\right).
  \end{split}
\end{equation}

\subsubsection{Second moment and Wick theorem}
\label{sec:subvarwick}
Define
\begin{equation}
\label{eq:linstats}
L_N(f,s):=N[\braket{f(H^s)}-\E_H \braket{f(H^s)}],
\end{equation}
then in this section, using Proposition~\ref{pro:resCLT}, we compute the leading order term of $\E_H L_N(f_1,s^1)L_N(f_2,s^2)$. More precisely, by \eqref{eq:wickresmain} for $p=2$, and using \eqref{eq:bfc} to estimate the error term, it follows that
\begin{equation}
\label{eq:varfs}
\begin{split}
&\E_H L_N(f_1,s^1)L_N(f_2,s^2)\\
&=\iint_\mathbf{R}\iint_{|\eta_1|,|\eta_2|\ge \eta_0}\partial_{\overline{z_1}}f_\mathbf{C}(z_1)\partial_{\overline{z_2}}f_\mathbf{C}(z_2) V_{12} \\
&\quad+\mathcal{O}\Bigg(N^\xi\eta_0(\norm{f_1}_{H^2}\norm{f_2}_\infty+\norm{f_2}_{H^2}\norm{f_1}_\infty)+\frac{N^\xi\norm{f_1}_\tau\norm{f_2}_\tau}{N^{1/2}\tau^{3/2}}+\frac{\norm{f_1}_{H^2}\norm{f_2}_{H^2}}{N^{1-\xi}\eta_0\tau}\left(1+\frac{1}{N\eta_0^2}\right)\\\
&\quad\qquad+\frac{(\norm{f_1}_{H^2}(\tau^2\norm{f_2}_\infty+\tau\norm{f_2}_{H^1})+\norm{f_2}_{H^2}(\tau^2\norm{f_1}_\infty+\tau\norm{f_1}_{H^1}))}{N^{1-\xi}\eta_0\tau}\left(1+\frac{1}{N\eta_0^2}\right)\\
&\quad\qquad+\frac{(\tau^2\norm{f_1}_\infty+\tau\norm{f_1}_{H^1})(\tau^2\norm{f_2}_\infty+\tau\norm{f_2}_{H^1})}{N}\left(1+\frac{\tau^2}{N^{1-2\epsilon}}\right)\Bigg)\\
&=\iint_\mathbf{R}\iint_{|\eta_1|,|\eta_2|\ge N^{-\epsilon}\tau^{-1}}\partial_{\overline{z_1}}f_\mathbf{C}(z_1)\partial_{\overline{z_2}}f_\mathbf{C}(z_2) V_{12}\\
&\quad+\mathcal{O}\left(N^\xi \norm{f_1}_\tau\norm{f_2}_\tau\left(\frac{N^\epsilon}{N}+\frac{N^{-\epsilon}}{\tau^3}\right)\left(1+\frac{\tau^2}{N^{1-2\epsilon}}\right)\right),
\end{split}
\end{equation}
where to go to the last line we chose $\eta_0\sim N^{-\epsilon}\tau^{-1}$, for any $\epsilon>0$, and $V_{12}$ is defined in \eqref{eq:defvij}. From \eqref{eq:varfs}, adding back the regimes $|\eta_i|< N^{-\epsilon}\tau^{-1}$ at the price of an error smaller than the one in the last line of \eqref{eq:varfs}, we conclude \eqref{eq:vsrwick} for $p=2$ by explicit computation in deterministic term as in \cite[Section D.2]{2012.13218}.

We conclude this section with the computation of higher moments:
\begin{equation}
\label{eq:fwick}
\begin{split}
\E_H\prod_{i\in [p]}L_N(f_i,s^i)&=\sum_{P\in \mathrm{Pair([p])}}\prod_{(i,j)\in P}\iint_\mathbf{R}\iint_{|\eta_i|,|\eta_j|\ge N^{-\epsilon}}\partial_{\overline{z_i}}f_\mathbf{C}(z_i)\partial_{\overline{z_j}}f_\mathbf{C}(z_j) V_{ij}\\
&\quad+\mathcal{O}\left(\left(\frac{N^\xi}{N^{1/2}\tau^{3/2}}+\frac{N^\epsilon}{N}+\frac{N^{-\epsilon}}{\tau^{2p-1}}\right)\left(1+\frac{\tau^2}{N^{1-2\epsilon}}\right)\prod_{i\in [p]}\norm{f_i}_\tau\right),
\end{split}
\end{equation}
which concludes the proof of \eqref{eq:vsrwick} for any $p\in\mathbf{N}$, after  adding back the regimes $|\eta_i|< N^{-\epsilon}\tau^{-1}$ at the price of an error smaller than the one in the second line of \eqref{eq:fwick}.

\subsubsection{Proof of Theorems~\ref{sff wigner} and~\ref{sff mono}}

We just show how Theorem~\ref{sff wigner} follows by Theorem~\ref{CLT f(H)}; the proof of Theorems~\ref{sff mono} is completely analogous and so omitted. In particular, to make the presentation shorter we just show the details of the proof of the first equation in \eqref{GUE SFFbis}. Using Theorem~\ref{CLT f(H)} as an input, the proof of the second equation in \eqref{GUE SFFbis} follows exactly in the same way.

First of all we write
\begin{equation}
\label{eq:lastphop}
\E_H|\braket{e^{\ii t H}}|^2=\E_H\big|\braket{e^{\ii t H}}-\E_H\braket{e^{\ii t H}}\big|^2+ \big|\E_H\braket{e^{\ii t H}}\big|^2.
\end{equation}
Then, using \eqref{eq:vsrwick} with $p=2$, $f_1(x)=e^{\ii tx}$, $f_2(x)=e^{-\ii tx}$, and $\tau=t$ to compute the leading order of the first term in \eqref{eq:lastphop}, and \eqref{exp fH} with $f(x)=e^{itx}$ to compute the leading order of the second term in \eqref{eq:lastphop}, we conclude that
\begin{equation}
\label{eq:fff}
\E_H|\braket{e^{\ii t H}}|^2=E_{\mathrm{wig}}(t)+\mathcal{O}\left(\frac{1}{N^{3/2}}+\frac{t^{5/2}}{N^{5/2}}\right),
\end{equation}
with $E_{\mathrm{wig}}(t)$ defined in \eqref{asymp}. Finally, using the asymptotics of $E_{\mathrm{wig}}(t)$ in \eqref{asymp} we readily conclude that the error term in \eqref{eq:fff} is much smaller than the leading term $E_{\mathrm{wig}}(t)$ as long as $t\ll N^{5/11}$.

\subsubsection{Variance calculations when \(s=r\) and the proof of  \Cref{lemma vss}}\label{s=r}
We note that~\eqref{vsrfg} generalises the standard variance calculation yielding~\eqref{Vf} to \(s\ne r\). For the case \(s=r\) the two formulas can be seen to be equivalent using the identity
\begin{equation}\label{id} \begin{split}
  \frac{1}{2\pi^2}\iint_{-2}^2 f'(x)g'(y) & \log\abs*{\frac{1-m_\mathrm{sc}(x)\ov{m_\mathrm{sc}(y)}}{1-m_\mathrm{sc}(x)m_\mathrm{sc}(y)}}\dif x\dif y  \\  &= \frac{1}{4\pi^2} \iint_{-2}^2 \frac{f(x)-f(y)}{x-y}\frac{g(x)-g(y)}{x-y} \frac{4-xy}{\sqrt{4-x^2}\sqrt{4-y^2}}\dif x \dif y
\end{split}
\end{equation}
that can be proven by integration by parts and using $(m_\mathrm{sc}(x) + x)m_\mathrm{sc}(x)=-1$
 from the explicit form of \(m_\mathrm{sc}(x)\) from~\eqref{eq:semicirc}.
\begin{proof}[Proof of \Cref{lemma vss}]
  Using~\eqref{id} the functions  $ v^{ss}_\pm(t)$ appearing in~\eqref{vpm kappa} can be expressed as 
  \begin{equation}\label{v00}
  \begin{split}
     v^{ss}_-(t) &=\frac{1}{\pi^2}\int_{-1}^1\int_{-1}^1
     \frac{1-x y}{ \sqrt{1-x^2} \sqrt{1-y^2} }\Bigl(\frac{\sin\left(t(x-y)\right)}{x-y}\Bigr)^2 \dif x\dif y\\
     &= \sum_{k=1}^\infty k J_{k}(2t)^2 = t^2 \Bigl[J_0(2t)^2 + 2J_1( 2 t)^2 - 
     J_0( 2 t)J_2( 2 t)\Bigr] \\
     &= \frac{2 t}{\pi }-\frac{1+2\sin(4t)}{16 \pi  t} + \landauO{t^{-2}}
  \end{split}
   \end{equation}
   and
   \begin{equation}\label{v00plus}
   \begin{split}
     v^{ss}_+(t) &=\frac{1}{4\pi^2}\int_{-1}^1\int_{-1}^1
     \frac{1-x y}{ \sqrt{1-x^2} \sqrt{1-y^2} }\Bigl(\frac{e^{2\ii tx} - e^{2\ii ty}}{x-y}\Bigr)^2 \dif x\dif y\\
     &=\sum_{k=1}^\infty (-1)^k k J_k(2t)^2 = - t J_0(2t) J_1(2t)\\
     &=\frac{\cos (4 t)}{2 \pi }-\frac{2+\sin (4 t)}{16 \pi  t}+ \landauO{t^{-2}}
   \end{split}
   \end{equation}
   where the series representations follow directly from~\cite[Remark 2.6]{2012.13218} and the series evaluations follow from~\cite[V.\S 5.51(1)]{MR1349110}.  
\end{proof}

\section{Central Limit Theorem for resolvents}
\label{sec:CLTres}
The proof of Proposition~\ref{pro:resCLT} is divided into three parts: in Section~\ref{sec:expres} we compute the subleading order correction to $\E_H \braket{G_i}$, in Section~\ref{sec:varres} we explicitly compute the variance, and finally in Section~\ref{sec:secwick} we prove a Wick Theorem. To keep our presentation simpler we only prove the CLT for resolvent in the complex case, the real case is completely analogous and so omitted (see e.g. \cite[Section 4]{2012.13218}).

\subsection{Computation of the expectation}
\label{sec:expres}
For \(G=G^s(z)\) we have 
\begin{equation}
\label{Geqa}
  I =s_1\un{H_1 G} + s_2 \un{H_2 G} -\braket{G}G - zG, \quad \un{H_i G}:=H_1 G+s_i\braket{G}G
\end{equation}
so that \(G\approx m\) for the solution \(m\) to the equation
\begin{equation}
\label{meqa}
  -\frac{1}{m} = z + m, \qquad m(z)=m_\mathrm{sc}(z).
\end{equation}
The fact that \(G\approx m\) in averaged and isotropic sense follows from the single resolvent local law \eqref{loclaw}. This is a consequence of the fact that the term $\un{H_i G}$ in \eqref{Geqa} is designed in such a way $\E \un{H_i G}\approx 0$ in averaged and isotropic sense. In fact, for Gaussian ensembles $\E \un{H_i G}=0$ and the deviation from zero for general ensembles
 is a lower order effect due to
non-vanishing of higher order cumulants of the entry distribution. 
From~\cref{Geqa,meqa} we obtain 
\begin{equation}
\label{eq:singgeq}
  (1-m^2\braket{\cdot})[G-m] = - m(s_1\un{H_1 G}+s_2\un{H_2G}) + m \braket{G-m}(G-m).
\end{equation}
Additionally, we define $\rho(z):=\pi^{-1}|\Im m(z)|$. For simplicity of notation from now on we assume that $\Im z>0$. We remark that by $1-m^2\braket{\cdot}$ in the lhs. of \eqref{eq:singgeq} we denote the operator acting on matrices $R\in\C^{N\times N}$ as $(1-m^2\braket{\cdot})[R]=R-m^2\braket{R}$.

We then start computing: 
\begin{equation}
\label{eq:geqsff}
\E_H \braket{G-m}=-\frac{m'}{m}\E_H\braket{s_1\un{H_1 G}+s_2\un{H_2G}} +\mathcal{O}\left(\frac{N^\xi}{N^2\eta^2\rho}\right),
\end{equation}
for any small $\xi>0$, where we used that $|1-m^2|\gtrsim \rho$, that \(m'=m^2/(1-m^2)\), and that $|\braket{G-m}|\lesssim N^\xi(N\eta)^{-1}$ by \eqref{loclaw}. Then using cumulant expansion (see \eqref{eq:cumexpsff}, ignoring the truncation error) we claim (and prove below) that
\begin{equation}
\label{eq:expa}
\begin{split}
\E_H\braket{s_1^1\un{H_1 G}+s_2^1\un{H_2G}}&=\E_H\frac{1}{N}\sum_{k\ge 2}\sum_{ab}\sum_{{\bm 
\alpha}\in\{ab,ba\}^k}\left(\frac{\kappa^{(1)}(ab,{\bm\alpha})}{k!}s_1\partial_{\bm\alpha}^{(1)}+\frac{\kappa^{(2)}(ab,{\bm\alpha})}{k!}s_2\partial_{\bm\alpha}^{(2)}\right) G_{ba} \\
&=\frac{\kappa_4}{N}\norm{s}_4^4 m^4+\mathcal{O}\left(\frac{N^\xi\rho^{3/2}}{N^{3/2}\eta^{1/2}}+\frac{N^\xi\rho^{3/2}}{N^2\eta^{3/2}}\right),
\end{split}
\end{equation}
where $\kappa^{(i)}(ab,{\bm \alpha})$ denotes the joint cumulant of the random variables $h_{ab}^i$, $h_{\alpha_1}^i, \dots, h_{\alpha_k}^i$, and  $\partial_{\bm\alpha}^{(i)}:=\partial_{\alpha_1}^{(i)}\cdots \partial_{\alpha_k}^{(i)}$, with $i=1,2$, where $\partial_{\alpha_j}^{(i)}$ denotes the directional derivative in the direction $h_{\alpha_j}^i$. Here $h_{\alpha_j}^i$ are the entries of $H_i$. Combining \eqref{eq:expa} with \eqref{eq:geqsff} we obtain exactly the expansion in \eqref{eq:expmpres} (recall that here we only present the proof in the complex case, the real case being completely analogous).

\begin{proof}[Proof of the second equality in \eqref{eq:expa}]

First of all we recall that by \eqref{mom} it follows the bound $|\kappa^{(i)}(ab,{\bm\alpha})|\lesssim N^{-(k+1)/2}$, with $i=1,2$.

We start with $k=2$. In this case we can neglect the summation when $a=b$ since it gives a contribution $N^{-3/2}$. Hence we can assume that $a\ne b$. In this case we have the bounds
\begin{equation}
\label{eq:needbk2a}
N^{-5/2}\left|\sum_{a\ne b} G_{ab}^3\right|\lesssim \frac{N^\xi\rho^{3/2}}{N^2\eta^{3/2}}, \qquad N^{-5/2}\left|\sum_{a\ne b} G_{aa}G_{bb}G_{ab}\right|\lesssim \frac{N^\xi}{N^{3/2}}+\frac{N^\xi\rho^{3/2}}{N^2\eta^{3/2}},
\end{equation}
with very high probability. The first bound in \eqref{eq:needbk2a} follows from the isotropic law in \eqref{loclaw}. The second bound in \eqref{eq:needbk2a} follows by writing $G=m+(G-m)$ and using the isotropic resummation
\begin{equation}
\sum_{ab} (G-m)_{aa}G_{ab}=\sum_a\braket{{\bm e}_a,G \bm1},
\end{equation}
with ${\bm e}_a\in\mathbf{R}^N$ the unit vector in the $a$-direction and $\bm1:=(1,\dots,1)\in\mathbf{R}^N$.

For $k=3$ whenever there are at least two off-diagonal $G$'s we get a bound $N^{-2}\eta^{-1}\rho$. The only way to get only diagonal $G$'s is that ${\bm \alpha}$ is one of $(ab,ba,ba)$, $(ba,ab,ba)$, $(ba,ba,ab)$; in this case $\kappa^{(i)}(ab,{\bm \alpha})=\kappa_4/N^2$, with $\kappa_4:=\kappa^{(i)}(ab,ba,ab,ba)$. For these terms we have (see \cite[Lemma 4.2]{2012.13218} for the analogous proof for Wigner matrices)
\begin{equation}
\label{eq:41a}
\partial_{\bm\alpha}^{(i)}G_{ba}=-2s_i^3G_{aa}^2G_{bb}^2+\mathcal{O}\left(\frac{N^\xi\rho}{N^2\eta}\right),
\end{equation}
with very high probability, where the error comes from terms with at least two off-diagonal $G$'s. Hence we finally conclude that the terms $k=3$ give a contribution:
\begin{equation}
\label{eq:42a}
-2\kappa_4\frac{3}{3!}\norm{s}_4^4\frac{1}{N^3}\sum_{ab}G_{aa}^2G_{bb}^2=\frac{\kappa_4}{N}\norm{s}_4^4 m^4+\mathcal{O}\left(\frac{N^\xi\rho^{3/2}}{N^{3/2}\eta^{1/2}}+\frac{N^\xi\rho}{N^2\eta}\right).
\end{equation}

All the terms with $k\ge 4$ can be estimated trivially using that $|G_{ab}|\lesssim 1$ with very high probability by \eqref{loclaw}.

\end{proof}

\subsection{Computation of the variance}
\label{sec:varres}

For the second moment, using \eqref{eq:singgeq}, we compute
\begin{equation}
\E_H\braket{G_1-\E_H G_1}\braket{G_2-\E_H G_2}=-\E_H\left(\frac{m_1'}{m_1}\braket{\un{s_1^1H_1 G_1}+s_2^1\un{H_2G_1}}+\frac{\kappa_4}{N}\norm{s^1}_4^4 m_1'm_1^3\right)\braket{G_2-\E_H G_2}+\mathcal{O}\left(\frac{N^\xi\Psi_2}{L^{1/2}}\right)
\end{equation}
where $s^i=(s_1^i,s_2^i)\in S^1$ and we used \eqref{eq:expmpres} to approximate $\braket{G_i-\E_H G_i}$ with $\braket{G_i-m_i}$. We made this replacement to use the equation for $G-m$ from \eqref{eq:singgeq}.

Then performing cumulant expansion we compute:
\begin{equation}
\label{eq:commpscemoma}
\begin{split}
&-\E_H\left(\frac{m_1'}{m_1}\braket{s_1^1\un{H_1 G_1}+s_2^1\un{H_2G_1}}+\frac{\kappa_4}{N}\norm{s^1}_4^4 m_1'm_1^3\right)\braket{G_2-\E_H G_2} \\
&\qquad=\frac{\braket{s^1,s^2} m_1'\E_H\braket{G_1G_2^2}}{m_1N^2}-\frac{\kappa_4}{N}\norm{s^1}_4^4 m_1'm_1^3\E_H\braket{G_2-\E_H G_2} \\
&\qquad\quad -\frac{m_1'}{m_1}\sum_{k\ge 2}\sum_{ab}\sum_{{\bm\alpha}\in\{ab,ba\}^k}\left(\frac{\kappa^{(1)}(ab,{\bm\alpha})}{k!N}s_1^1\partial_{{\bm \alpha}}^{(1)}+s_2^1\frac{\kappa^{(2)}(ab,{\bm\alpha})}{k!N}\partial_{{\bm \alpha}}^{(2)}\right)\E_H\big[(G_1)_{ba}\braket{G_2-\E_H G_2}\big].
\end{split}
\end{equation}

Using the local law~\cref{eq:23gl} we conclude that
\begin{equation}
\begin{split}
\frac{m_1'}{m_1}\braket{s^1,s^2}\frac{\braket{G_1G_2^2}}{N^2}&=\braket{s^1,s^2}\frac{m_1'm_2'}{(1-\braket{s^1,s^2}m_1m_2)^2N^2}+\mathcal{O}\left(\frac{N^\xi}{N^3\eta_1\eta_2\eta_*^2}+\frac{N^\xi}{N^4|\eta_1\eta_2|^3}\right) \\
&=-\frac{1}{N^2}\partial_{z_1}\partial_{z_2}\log(1-\braket{s^1,s^2}m_1m_2)+\mathcal{O}\left(\frac{N^\xi}{N^3\eta_1\eta_2\eta_*^2}+\frac{N^\xi}{N^4|\eta_1\eta_2|^3}\right),
\end{split}
\end{equation}
with very high probability.

We are now left with the third line of \eqref{eq:commpscemoma}. The ${\bm \alpha}$-derivative in \eqref{eq:commpscemoma} may hit either $(G_1)_{ba}$ or $\braket{G_2-\E_2 G_2}$. Define
\begin{equation}
\label{eq:almlowor}
\begin{split}
\Phi_k:&= \frac{m_1'}{m_1}\sum_{ab}\sum_{{\bm\alpha}\in\{ab,ba\}^k}\left(\frac{\kappa^{(1)}(ab,{\bm\alpha})}{k!N}s_1^1\partial_{{\bm \alpha}}^{(1)}+s_2^1\frac{\kappa^{(2)}(ab,{\bm\alpha})}{k!N}\partial_{{\bm \alpha}}^{(2)}\right)\E_H\big[(G_1)_{ba}\braket{G_2-\E_H G_2}\big]\\
&=\sum_{ab}\sum_{\bm \alpha}\frac{s_1^1\kappa^{(1)}(ab,{\bm\alpha})}{k! N}\E_H\left(\frac{m_1'}{m_1}\partial_{\bm \alpha_1}^{(1)}\frac{(G_1)_{ba}}{k_1!}\right)\left(\partial_{\bm \alpha_2}^{(1)}\frac{\braket{G_2-\E_H G_2}}{(k-k_1)!}\right)\\
&\quad+\sum_{ab}\sum_{\bm \alpha}\frac{s_2^1\kappa^{(2)}(ab,{\bm\alpha})}{k! N}\E_H\left(\frac{m_1'}{m_1}\partial_{\bm \alpha_1}^{(2)}\frac{(G_1)_{ba}}{k_1!}\right)\left(\partial_{\bm \alpha_2}^{(2)}\frac{\braket{G_2-\E_H G_2}}{(k-k_1)!}\right),
\end{split}
\end{equation}
where $k_1$ denotes the number of derivatives that hit $(G_1)_{ba}$. The summation $\sum_{\bm \alpha}$ indicates the summation over tuples ${\bm \alpha}_i^{k_i}$, with $i=1,2$ and $k_2:=k-k_1$. We now claim that
\begin{equation}
\label{eq:maineq4tha}
\Phi_k=-\bm 1(k=3)\Bigl(\kappa_4\frac{\braket{s^1\odot s^1,s^2\odot s^2}}{2N^2}(m_1^2)'(m_2^2)'+\frac{\kappa_4}{N}\norm{s^1}_4^4 m_1'm_1^3\Bigr)+\mathcal{O}\left(N^\xi\frac{\Psi_2}{L^{1/2}}\right).
\end{equation}

Similarly to the proof of~\cite[Eq.\ (113)]{2012.13218} we readily conclude that the terms
in $\Phi_k$ in~\eqref{eq:almlowor} with $k=2$, or $k_1$ odd and $k\ge 4$, or $k\ge 3$ and $k_1$ even are bounded by $N^\xi\Psi_2 L^{-1/2}$. For $k=3$ and $k_1=3$, analogously to \eqref{eq:41a}--\eqref{eq:42a} we obtain a contribution of
\begin{equation}
-\frac{\kappa_4}{N}\norm{s^1}_4^4 m_1'm_1^3+\mathcal{O}\left(\frac{N^\xi}{N|\eta_1|L^{1/2}}\right)
\end{equation}
to~\cref{eq:maineq4tha}. 

For $k=3$ and $k_1=1$ we start computing the action of the ${\bm \alpha_1}$-derivative on $(G_1)_{ba}$:
\begin{equation}
\sum_{\bm \alpha_1}\partial_{\bm \alpha_1}^{(i)}(G_1)_{ba}=-s_i^1(G_1)_{ba}^2-s_i^1(G_1)_{aa}(G_1)_{bb}=-s_i^1m_1^2(1+\delta_{ab})+\mathcal{O}\left(N^\xi\sqrt{\frac{\rho_1}{N|\eta_1|}}\right),
\end{equation}
with very high probability. Additionally, we have that (see \cite[Lemma 4.2]{2012.13218} for the analogous proof for Wigner matrices)
\begin{equation}
\partial_{ab,ba}^{(i)}\braket{G_2-\E_H G_2}=\frac{2 m_2m_2'}{N}(s_i^2)^2+\mathcal{O}\left(\frac{N^\xi\rho_2^{1/2}}{(N|\eta_2|)^{3/2}}\right),
\end{equation}
with very high probability. We thus conclude that the \((k,k_1)=(3,1)\) contribution to~\cref{eq:maineq4tha} is
\begin{equation}
\label{eq:impalmt}
-\kappa_4\frac{\braket{s^1\odot s^1,s^2\odot s^2}}{2N^2}(m_1^2)'(m_2^2)'+\mathcal{O}\left(\frac{N^\xi\Psi_2}{L^{1/2}}\right),
\end{equation}
where we used that only the terms with $\kappa_4=\kappa^{(i)}(ab,ba,ab,ba)$ contribute. This concludes the proof of \eqref{eq:wickresmain} for $p=2$.

\subsection{Asymptotic Wick Theorem}
\label{sec:secwick}

The proof of the Wick Theorem for resolvent is completely analogous to the one for Wigner matrices in \cite[Section 4]{2012.13218}. The only differences are that along the proof we have to carefully keep track of the $s_i$, as we did in Section~\ref{sec:varres}, since in the Wigner case $s^1=\dots=s^p=(1,0)$, and that we have to use the three $G$'s local law in \eqref{eq:23gl} with a weaker error term instead of the one in \cite[Eq. (45)]{2012.13218} to compute the leading order deterministic term (see \eqref{eq:2gp1}--\eqref{eq:2gp2} below).

Define
\begin{equation}
Y_{S}:=\prod_{i\in S} \braket{G_i-\E_H G_i},
\end{equation}
with $S\subset \mathbf{N}$. Similarly to Section~\ref{sec:varres} we start computing
\begin{equation}
\begin{split}
\E_H Y_{[p]}&=\sum_{i\in [2,p]}\frac{m_1'}{m_1}\frac{\braket{s^1,s^i}}{N^2}\E_H\braket{G_1G_i^2} Y_{[p]\setminus\{1,i\}}-\frac{\kappa_4}{N}\norm{s^1}_4^4 m_1'm_1^3 \E_H Y_{[2,p]} \\
&\quad-\sum_{k\ge 2}\sum_{ab}\sum_{{\bm\alpha}\in\{ab,ba\}^k}\left(\frac{\kappa^{(1)}(ab,{\bm\alpha})}{k!N}s_1^1\partial_{{\bm \alpha}}^{(1)}+s_2^1\frac{\kappa^{(2)}(ab,{\bm\alpha})}{k!N}\partial_{{\bm \alpha}}^{(2)}\right)\E_H\left[\frac{m_1'}{m_1}(G_1)_{ba}Y_{[2,p]}\right] \\
&\quad+\mathcal{O}\left(N^\xi\frac{\Psi_p}{L^{1/2}}\right).
\end{split}
\end{equation}
Then proceeding analogously to \eqref{eq:almlowor}--\eqref{eq:impalmt} (see also \cite[Eqs. (110)-(114)]{2012.13218} for the Wigner case) we conclude that
\begin{equation}
\label{eq:2gp1}
\begin{split}
\E_H Y_{[p]}&=\sum_{i\in [2,p]}\frac{m_1'}{m_1}\frac{\braket{s^1,s^i}}{N^2}\E_H\braket{G_1G_i^2} Y_{[p]\setminus\{1,i\}}\\
&\quad-\sum_{i\in [2,p]} \kappa_4 \frac{\braket{s^1\odot s^1,s^i\odot s^i}}{2N^2}(m_1^2)'(m_i^2)'\E_H Y_{[1,p]\setminus\{1,i\}}+\mathcal{O}\left(N^\xi\frac{\Psi_p}{L^{1/2}}\right).
\end{split}
\end{equation}
In order to compute the leading deterministic term of $\braket{G_1G_i^2}$ we use the local law \eqref{eq:23gl} and get
\begin{equation}
\label{eq:2gp2}
\E_H Y_{[p]}=\frac{1}{N^2}\sum_{i\in [2,p]}V_{1,i}\E_H Y_{[p]\setminus\{1,i\}}+\mathcal{O}\left(N^\xi\Psi_p\left(\frac{1}{L^{1/2}}+\frac{1}{N\eta_*^2}+\frac{1}{N^2\eta_*^4}\right)\right).
\end{equation}
Finally, proceeding iteratively we conclude \eqref{eq:wickresmain}.

\subsection{Multi resolvents local laws}
The goal of this section is to prove the local laws in \eqref{eq:23gl}. Starting from \eqref{eq:singgeq} we get
\begin{equation}
\begin{split}
(1-\braket{s^1,s^2}m_1m_2\braket{\cdot})G_1G_2&=m_1m_2+m_1\braket{G_2-m_2}-m_1\big(s_1^1\underline{H_1G_1G_2}+s_2^1\underline{H_2G_1G_2}\big) \\
&\quad+m_1\braket{s^1,s^2}\braket{G_1G_2}(G_2-m_2)+m_1\braket{G_1-m_1}G_1G_2.
\end{split}
\end{equation}
We estimate $|\braket{G_1G_2}|\lesssim N^\xi (\eta^*)^{-1}$ with very high probability, where $\eta^*:=\eta_1\vee \eta_2$, using
$|\braket{G_1G_2}| \le \braket{|G_2|}/\eta_1$ (in case $\eta^*=\eta_1$) and the rigidity of eigenvalues to estimate  $\braket{|G_2|}\le N^\xi$.
Then  by the single resolvent local law $|\braket{G_i-m_i}|\lesssim N^\xi (N\eta_i)^{-1}$ from \eqref{loclaw}
we obtain
 that
\begin{equation}
(1-\braket{s^1,s^2}m_1m_2)\braket{G_1G_2}=m_1m_2-m_1\big(s_1^1\braket{\underline{H_1G_1G_2}}+s_2^1\braket{\underline{H_2G_1G_2}}\big)+\mathcal{O}\left(\frac{N^\xi}{N|\eta_1||\eta_2|}\right),
\end{equation}
with very high probability. Finally, using that
\begin{equation}
\label{eq:2gunder}
|\braket{\underline{H_iG_1G_2}}|\lesssim \frac{N^\xi}{\sqrt{N|\eta_1\eta_2|}\eta_*}, \qquad i\in [2]
\end{equation}
with very high probability from an analogous proof to \cite[Eq. (5.8)]{2106.10200} (see also \cite[Eq. (5.10c)]{1912.04100}), and that
\begin{equation}
\label{eq:lbso}
|1-\braket{s^1,s^2}m_1m_2|\gtrsim \eta^*.
\end{equation}
we conclude the first local law in~\eqref{eq:23gl}.

For the second local law in~\eqref{eq:23gl} we start writing the equation for $G_1G_2^2$:
\begin{equation}
\begin{split}
G_1G_2^2&=m_1m_2'+m_1(G_2^2-m_2')-m_1\big(s_1^1\underline{H_1G_1G_2^2}+s_2^1\underline{H_2G_1G_2^2}\big) \\
&\quad+m_1 \braket{s^1,s^2}\big(\braket{G_1G_2}G_2^2+\braket{G_1G_2^2}G_2\big)+m_1\braket{G_1-m_1}G_1G_2^2.
\end{split}
\end{equation}
Then, using the usual single $G$ local law and the two $G$'s local law from \eqref{eq:23gl}, we conclude that
\begin{equation}
\begin{split}
(1-\braket{s^1,s^2}m_1m_2)\braket{G_1G_2^2}&=m_1m_2'+ \braket{s^1,s^2}\frac{m_1^2m_2m_2'}{1-\braket{s^1,s^2}m_1m_2} \\
&\quad -m_1\big(s_1^1\underline{H_1G_1G_2^2}+s_2^1\underline{H_2G_1G_2^2}\big) +\mathcal{O}\left(\frac{N^\xi}{N|\eta_1||\eta_2|\eta_*}\right).
\end{split}
\end{equation}
Then, using that
\begin{equation}
\label{eq:3gunder}
|\braket{\underline{H_iG_1G_2^2}}|\lesssim \frac{N^\xi}{N\sqrt{|\eta_1\eta_2|}\eta_*^2}, \qquad i\in [2],
\end{equation}
with very high probability, and \eqref{eq:lbso} we conclude \eqref{eq:23gl}. The proof of \eqref{eq:3gunder} follows analogously to the one of \eqref{eq:2gunder}.

\section{Stationary phase calculations}\label{sec:statphase}

The proof of~\eqref{EHmonSTmon} is a tedious stationary phase calculation since $v_\pm^{sr}(t)$,
the leading part of  $v_{\pm,\kappa}^{sr}(t)$ (see~\eqref{vpm kappa}),  are
given in terms of oscillatory integrals for $t\gg 1$ being the large parameter.  Unlike in the $s=r$
case, no explicit formula similar to~\eqref{v00pm} is available. The main complication is 
that $V^{sr}(x,y)$ defined in \eqref{vsrfg} has logarithmic singularities, integrated against
a fast oscillatory term from $f'g'$, so standard stationary 
phase formulas cannot directly be applied. Nevertheless, a certain number of integration by
parts can still be performed before  the derivative of the integrand stops being integrable
and the leading term can be computed.

We will first give a proof of 
\be\label{tar1}
\E_s\E_r v^{sr}_-(t) \sim \sqrt{t}
\ee
then we explain how to modify this argument to obtain
\be\label{tar2}
\E_s\E_r v^{sr}_-(t)^2 \sim t^{3/2},
\ee
in both cases with a definite large $t$  asymptotics with computable explicit constants.
The proof reveals that the corresponding results for $\E_s\E_r v^{sr}_+(t)$ and $\E_s\E_r v^{sr}_+(t)^2$ 
guarantee only an upper bound with the same behavior
\be\label{tar3}
\E_s\E_r v^{sr}_+(t) \lesssim \sqrt{t}, \qquad \E_s\E_r v^{sr}_+(t)^2 \lesssim t^{3/2}
\ee
 depending on the distribution of
$s$ on $S^1$, the matching lower bound may
not necessarily hold.  However, for our main conclusions like~\eqref{compa} only an upper bound on $S_\mathrm{res}(t)$
is important.

All these exponents are valid for the $k=2$ case, i.e.\ for $H^s=s_1H_1+s_2H_2$. For the general multivariate
model, $k\ge 3$, exactly the same proof gives the upper bounds
\be\label{tar4}
  \E_s\E_r v^{sr}_\pm(t) \lesssim \min\{ 1, t^{\frac{3-k}{2}} \}, \qquad \E_s\E_r v^{sr}_+(t)^2 \lesssim \min\{ 1, t^{\frac{5-k}{2}} \}.
\ee
The $k$-dependence of the exponent can directly be related  to the tail behavior~\eqref{rhou} and~\eqref{tail}
 below, so for simplicity we  will carry out our main analysis only for  $k=2$. 
 In fact, a  more careful analysis yields somewhat better bounds than~\eqref{tar4}, but we will not pursue this improvement here.

We introduce a new random variable
$$
   U: = \langle s, r\rangle%
$$
then clearly $|U|\le 1$ and since  $r, s\in S^k$ have a distribution with an $L^2$ density, it is easy to see
that the density $\rho^*$ of $U$ is bounded by 
\be\label{rhou}
  \rho^* (U)\lesssim  (1-U^2)^{\frac{k-3}{2}}. %
\ee
The fact that the main contribution to the lhs. of \eqref{tar4} comes from the regime $U\approx 1$ is a consequence of the singularity of the logarithm in \eqref{vsrfg} in this regime (see computations below). Indeed, $U=\cos \alpha$ where $\alpha$ is the angle between $r, s$ and near $U\approx \pm 1$
we have $1\pm U \approx \frac{1}{2}\alpha^2(1+ O(\alpha^2))$.
For example, for $k=2$  we have
\be\label{1-U}
   \Prob ( 1-U= \epsilon+\dif \epsilon) = \frac{\dif \epsilon}{\sqrt{\epsilon}} \Big(\int_{S^1}\rho^2(s)\dif s\Big)(1+O(\sqrt{\epsilon}))
\ee
\be\label{1+U}
    \Prob ( 1+U= \epsilon+\dif \epsilon) = \frac{\dif \epsilon}{\sqrt{\epsilon}} \Big(\int_{S^1}\rho(s)\rho(s+\pi)\dif s \Big)(1+O(\sqrt{\epsilon}))
\ee
in the $\epsilon\ll 1$  regime.
In particular, the bound in \eqref{rhou} is actually an asymptotics  in the most critical $U\approx 1$ regime,
while the regime $U\approx -1$ it may happen
that the density $\rho^*$ is much smaller than \eqref{rhou} predicts. For symmetric distribution, $\rho(s)=\rho(s+\pi)$,
the two asymptotics are the same.
Similar relations hold for $k\ge 3$, in which case we have
\be\label{tail}
   \Prob ( 1\pm U = \epsilon +\dif \epsilon ) %
   \lesssim \epsilon^{\frac{k-3}{2}}\dif \epsilon
\ee
with an explicit asymptotics for $U\approx 1$.

So  we will study
\be\label{RR}
  R_\pm(t) = t^2 \Re \int \dif U \rho^*(U) \iint_{-2}^2 \dif x\dif y e^{\ii t (x\pm y)} 
\Big[  \log \abs*{1-U m(x)\ov{m(y})} - \log \abs*{1-Um(x)m(y)}\Big].
\ee
Since $|m|\le 1$, as long as $|U|\le 1-\delta$ for any small fixed $\delta>0$,
the arguments of the logarithms are separated away from zero
 and they allow to perform arbitrary number of integration by parts, each gaining a factor of $1/t$.
 There is a square root singularity of $m(x)$ and $m(y)$
  at the spectral edges $2, -2$ which still allows  one to perform one integration by parts in each variable
  since $m'$  is still integrable. Therefore the contribution of the regime  $|U|\le 1-\delta$
  to~\eqref{RR} is of order $t^2(1/t)^2= O(1)$, hence negligible compared with the target~\eqref{tar1}.
In the sequel we thus focus on the important $U\approx \pm 1$ regimes, in particular
every $\int\dif U$ integral is understood to be restricted to $|U|\ge 1-\delta$.

Note that $\ov{m(y)} =-m(-y)$, so if $U$ has a symmetric distribution (for example if $s\in S^1$ has a symmetric distribution),
then by symmetry we have
$$   
    R_-(t)=-R_+(t).
$$

For definiteness, we focus on $R_-(t)$, the analysis of $R_+$ is analogous.
 From the explicit form $m(x) =\frac{1}{2}(-x+\ii \sqrt{4-x^2})$ a simple exercise shows that
 \be\label{lowerb}
   |1-Um(x)\ov{m(y)}|^2 \gtrsim (1-U)^2 + (x-y)^2, \qquad  |1-Um(x) m(y)|^2 \gtrsim (1+U)^2 + (x+y)^2.
 \ee
This shows that the critical regime is $U\approx 1$ and $x\approx y$ for the first integrand in \eqref{RR}
and $U\approx -1$, $x\approx -y$ for the second. 
Again, for definiteness, we focus on the first regime, i.e.
on the first log-integrand in \eqref{RR}  and establish the following relations for large $t$ and $k=2$:

\begin{lemma}\label{Rlemma} In the $k=2$ case we have
\be\label{R0}
t^2 \int \dif U \rho^*(U) \Re \iint_{-2}^2  e^{\ii t (x- y)} 
 \log \abs*{1-U m(x)\ov{m(y})}^2 \dif x \dif y \sim \sqrt{t}
\ee
and
\be\label{R1}
t^4  \int \dif U \rho^*(U) \Bigg[ \Re \iint_{-2}^2  e^{\ii t (x- y)} 
 \log \abs*{1-U m(x)\ov{m(y})}^2 \dif x \dif y \Bigg]^2 \sim t^{3/2},
\ee
for $t\ge 1$.  For $t\gg 1$ an analogous asymptotic statement holds
 with explicitly computable positive constants that depend on the distribution of $s$.
\end{lemma}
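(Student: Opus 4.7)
The plan is to reduce both estimates to an explicit scaling calculation at $U\approx 1$ (the regime $U\approx -1$ is analogous via $\ov{m(y)}=-m(-y)$). Combining a local quadratic model for $|1-Um(x)\ov{m(y)}|^2$ near the diagonal with the distributional Fourier identity
\[
\int_\R e^{\ii tu}\log(a^2+u^2)\,du = -\frac{2\pi}{|t|}e^{-a|t|},\qquad a>0,
\]
and the square-root tail~\eqref{1-U} of $\rho^*$ at $U=1$, the powers $\sqrt t$ and $t^{3/2}$ will emerge from a single scaling $\epsilon:=1-U=\tau/t$ in the Gamma-type integral $\int_0^\infty \tau^{-1/2}e^{-\tau/\sqrt{c(z)}}\,d\tau=\sqrt{\pi}\,c(z)^{1/4}$ with $c(z):=(4-z^2)^{-1}$.

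\emph{Localization and local model.} On $\{|U|\le 1-\delta\}$ the bounds~\eqref{lowerb} keep both logarithms smooth, so two $(x,y)$-integrations by parts give $O(t^{-2})$, contributing $O(1)$ after the $t^2$ prefactor; this is negligible for either target. Hence it suffices to restrict to $U\in[1-\delta,1]$ and set $u:=x-y$, $z:=(x+y)/2$. Writing $m(x)=\ii e^{-\ii\phi(x)}$ with $\phi(x)=\arcsin(x/2)$, the exact identity $|1-Um(x)\ov{m(y)}|^2 = \epsilon^2+4U\sin^2((\phi(y)-\phi(x))/2)$ combined with $\phi(y)-\phi(x)=-\phi'(z)u+O(u^3)$ yields the local quadratic approximation $|1-Um(x)\ov{m(y)}|^2 = \epsilon^2+c(z)u^2+O(u^4)$.

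\emph{Fourier transform and $\epsilon$-scaling.} Applying the Fourier identity with $a=\epsilon/\sqrt{c(z)}$ (the additive constant $\log c(z)$ spawns a $\delta(t)$, harmless for $t\neq 0$; truncating $u$ to $(-4+2|z|,\,4-2|z|)$ contributes an oscillatory $z$-boundary term which is killed by a non-stationary-phase estimate in $z$) reduces the leading part of the inner double integral to
\[
\Re\iint_{-2}^2 e^{\ii t(x-y)}\log\bigl(\epsilon^2+c(z)u^2\bigr)\,dx\,dy = -\frac{2\pi}{t}\int_{-2}^2 e^{-\epsilon t/\sqrt{c(z)}}\,dz + O(t^{-2}).
\]
By~\eqref{1-U}, $\rho^*(U)\,dU=C_s\epsilon^{-1/2}(1+O(\sqrt\epsilon))\,d\epsilon$ with $C_s=\int_{S^1}\rho^2>0$ explicit. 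Setting $\tau=\epsilon t$, the left-hand side of~\eqref{R0} becomes $-2\pi^{3/2}C_s\sqrt{t}\int_{-2}^2 (4-z^2)^{-1/4}\,dz+O(1)$, which is of order $\sqrt{t}$ with an explicit constant (the $z$-integral converges since $(4-z^2)^{-1/4}$ is integrable at the edges). For~\eqref{R1} the same scheme applies but with the $z$-bracket squared before $\rho^*$-averaging; the same substitution $\tau=\epsilon t$ then yields $(2\pi)^2 C_s t^{3/2}\int_0^\infty\tau^{-1/2}\bigl(\int_{-2}^2 e^{-\tau/\sqrt{c(z)}}dz\bigr)^2 d\tau+O(t)$, i.e.\ $\sim t^{3/2}$.

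\emph{Remainder control and extensions.} The main technical obstacle is to show that the remainder $r_\epsilon(u,z):=\log|1-Um(x)\ov{m(y)}|^2-\log(\epsilon^2+c(z)u^2)$—which vanishes at $u=0$ but whose $u$-derivatives can be as large as $O(1/\epsilon)$ near $|u|\sim\epsilon$—and the off-diagonal contribution from $|u|\gtrsim 1$ are subleading uniformly in $\epsilon$. I handle this by splitting $u$ into the three scales $|u|\lesssim\epsilon$, $\epsilon\lesssim|u|\lesssim t^{-1/2}$, $|u|\gtrsim t^{-1/2}$, using Taylor estimates for the first two and one further integration by parts on the last; combined with the integrability of the edge density $(4-z^2)^{-1/4}$ this keeps the contribution of $r_\epsilon$ subleading to $\sqrt t$ after the $\epsilon$-averaging. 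The bounds~\eqref{tar3} follow by the same scheme near $U\approx-1$ with phase $e^{\ii t(x+y)}$, and the multiparametric bounds~\eqref{tar4} from substituting the tail $\rho^*(U)\lesssim\epsilon^{(k-3)/2}$ of~\eqref{rhou} into the $\tau$-integral, which shifts the exponents to $(3-k)/2$ and $(5-k)/2$ respectively.
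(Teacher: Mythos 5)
Your proposal takes essentially the same route as the paper's proof: localize to $U\approx 1$, $x\approx y$; replace the exact expression by a local quadratic form $\epsilon^2+c(z)u^2$; Fourier-transform the logarithm of the quadratic form (you do this via the distributional identity for $\widehat{\log(a^2+u^2)}$, the paper does one integration by parts in $b$ followed by a residue computation — these are equivalent); and finally extract the power of $t$ by substituting $\tau=\epsilon t$ against the $\epsilon^{-1/2}$ tail of $\rho^*$. The difference between your quadratic model and the paper's $F(U,a,b)=(1-U)^2+\tfrac{4U^2b^2}{4-a^2}$ (which keeps $U$ in the coefficient) is immaterial near $U=1$, and your trigonometric identity $|1-Ue^{\ii\theta}|^2=(1-U)^2+4U\sin^2(\theta/2)$ is an equally clean starting point. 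The power count for \eqref{R1} is also correct.

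Two points to keep in mind. First, the remainder control — your Part II — is stated as a plan rather than a proof. The paper does this by proving pointwise bounds $|\Delta|\lesssim \tfrac{b^2}{(4-a^2)^{3/2}}\sqrt{F}$ together with matching bounds on $\partial_b\Delta$, $\partial_a\Delta$, $\partial_b^2\Delta$, $\partial_a\partial_b\Delta$ (eqs.~\eqref{appapp}, \eqref{1der}, \eqref{2der}), then performs two integrations by parts and integrates the resulting $(\epsilon^2+u^2)^{-1/2}$-type singularity against $\rho^*$. Your alternative — splitting $|u|$ into three scales and using Taylor plus one IBP — is plausible and should close the same way, but it is the technically heavy part and you have not carried it through; the statement that "$u$-derivatives of $r_\epsilon$ can be as large as $O(1/\epsilon)$" is in fact pessimistic (the cancellation in $M-F$ makes them $O(1)$ on the scale $|u|\sim\epsilon$), so you have more room than you give yourself credit for. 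Second, a small conceptual clean-up: for the specific integrand of the lemma, $\log|1-Um(x)\ov{m(y)}|^2$, the only near-singular regime is $U\approx 1$, $x\approx y$ (cf.~\eqref{lowerb}); the $U\approx -1$ near-singularity belongs to the \emph{other} logarithm $\log|1-Um(x)m(y)|^2$ appearing in $R_\pm$, not to the lemma's integrand. For $U\approx-1$ the lemma's integrand is bounded and smooth, so it is covered by the same "$|U|\le 1-\delta$" integration-by-parts argument you already give. Your remark that the $U\approx-1$ analysis is "analogous" is harmless but belongs to the treatment of $R_\pm$, not to the lemma proper.
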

\begin{proof}[Proof of Lemma~\ref{Rlemma}]Introduce the variables
$$
   a:=\frac{x+y}{2}, \quad b:=\frac{x-y}{2}, \qquad \mbox{i.e.} \quad x= a+b, \quad y= a-b.
$$
Since $|x|, |y|\le 2$ we have  
\be\label{range}
|a|\le 2, \qquad |b|\le \min\{ |2-a|, |2+a|\}.
\ee
 In terms of these variables, we have
\be\label{ab}
   |1-Um(x)\ov{m(y)}|^2  = \Big( 1- U + 2U \frac{b^2}{ b^2 + d^2} \Big)^2
   +  \frac{4 U^2 b^2  d^2}{(b^2+ d^2)^2}, \qquad d:= \frac{1}{2}\big[ \sqrt{4-(a+b)^2} +\sqrt{4-(a-b)^2}\big].
\ee
Here we also used the identity
$$
   1-m(x)\ov{m(y)} = \frac{2b}{2b + m(x)-\ov{m(y)}} =\frac{2b}{b+\ii d}
$$
following from 
the equation $-m(x)^{-1}=x+ m(x)$ and  similarly  for $m(y)$.
In the regime~\eqref{range} we have 
\be\label{best}
|b|\le \frac{1}{2}(4-a^2), \qquad |b|\le \sqrt{4-a^2}.
\ee
Note that by Taylor expansion around $a$  and  concavity  of the function $x\to \sqrt{4-x^2}$  in $x\in[-2,2]$, we have
\be\label{ddef}
    0\le \sqrt{4-a^2}-d  \lesssim  \frac{b^2}{(4-a^2)^{3/2}}\le 
    \frac{|b|}{\sqrt{4-a^2}}, \quad \mbox{as well as} \quad \frac{1}{2}\sqrt{4-a^2}\le d\le \sqrt{4-a^2}.
\ee
We define  the function 
\be\label{defF}
  F = F(U, a, b): = (1-U)^2 +\frac{4U^2b^2}{4-a^2}
\ee
for $|U|\le 1$, and $a, b$ as in~\eqref{range}. We will use $F$
to approximate  
\be\label{Mdef}
  M=M(U, a,b):=|1-Um(a+b)\ov{m(a-b)}|^2 
\ee
 in the critical regime
where $|U|\ge 1-\delta$ and $|b|\le\delta$  for some small fixed $\delta>0$.
We clearly have
\be\label{appbound}
  M(U, a,b)\ge \frac{1}{4} F(U, a, b)
\ee
in the regime~\eqref{range}, where $|b|\le \sqrt{4-a^2} \le 2d$, using~\eqref{ddef}.

For the difference function
\be\label{deltadef}
  \Delta(U, a, b):=  M(U, a,b)- F(U, a, b)
\ee
an elementary calculation
from~\eqref{ab}--\eqref{ddef}  gives
\be\label{appapp}
\big| \Delta(U, a, b)\big| %
\lesssim \frac{b^2}{(4-a^2)^{3/2}} \sqrt{F}
\ee
in the regime $|U|\ge 1-\delta$ and $|b|\le\delta$.
Furthermore, similar estimates hold for the first derivative;
\be\label{1der}
  \Bigg| \frac{\dif }{\dif b}\Delta(U, a, b)\Big]\Bigg| 
  \lesssim \frac{|b|\sqrt{F}}{(4-a^2)^{3/2}} , \quad  \Bigg| \frac{\dif }{\dif a}\Delta(U, a, b)\Big]\Bigg| 
  \lesssim \frac{b^2\sqrt{F}}{(4-a^2)^{5/2}} \lesssim \frac{|b|\sqrt{F}}{(4-a^2)^{3/2}},
\ee
as well as for the second derivatives
\be\label{2der}
  \Bigg| \frac{\dif^2 }{\dif b^2}\Delta(U, a, b)\Big]\Bigg| 
  \lesssim  \frac{\sqrt{F}}{(4-a^2)^{3/2}} , \quad 
  \Bigg|  \frac{\dif }{\dif a}\frac{\dif }{\dif b}\Delta(U, a, b)\Big]\Bigg| 
  \lesssim  \frac{|b|\sqrt{F}}{(4-a^2)^{5/2}} \lesssim \frac{\sqrt{F}}{(4-a^2)^{3/2}} .
\ee

The proof of Lemma~\ref{Rlemma} consists of two parts. First we compute
the  integral with $\log F$, i.e.\ we show that
\be\label{intapp2}
t^2  \int \dif U \rho^*(U) \Re \iint_{-2}^2  e^{\ii t (x- y)} 
\log F\big(U, \frac{x+y}{2},  \frac{x-y}{2}\big)\dif x \dif y \sim  \sqrt{t}
\ee
with an explicit positive constant factor in the asymptotic regime $t\gg 1$.
Second, we show that the integrand
in~\eqref{R0} can indeed be replaced with $F$ up to a negligible error,
\be\label{intapp}
\Bigg| t^2 \int \dif U \rho^*(U) \iint_{-2}^2  e^{\ii t (x- y)} \Big[ 
 \log \abs*{1-U m(x)\ov{m(y})}^2 - \log F\big(U, \frac{x+y}{2},  \frac{x-y}{2}\big)\Big] \dif x \dif y\Bigg| \lesssim 1.
\ee

{\bf Part I.} 
To prove~\eqref{intapp2}, we use the $a, b$ variables and  the symmetry of $F$ in $a$ to restrict
the $a$ integration to $0\le a\le 2$:
\be\label{rest}
\eqref{intapp2} =4t^2 \Re \int \dif U \rho^*(U) \int_0^2 \dif a  \int_{-(2-a)}^{2-a}  \dif b\; e^{2\ii t b} 
\log F\big(U,a,b) .
\ee
Using integration by parts, we have
\be\label{intpp}
\begin{split}
\int_{-(2-a)}^{2-a}  \dif b\; e^{2\ii t b} \log \big[ (1-U)^2 +\frac{4U^2b^2}{4-a^2}\big] 
 =  & \frac{1}{2\ii t} \Big[ e^{2\ii t (2-a)} - e^{-2\ii t (2-a)}\Big] \log \big[ (1-U)^2 +\frac{4U^2(2-a)}{2+a}\big] \\
 &- \frac{1}{2\ii t} \frac{4U^2}{4-a^2} \int_{-(2-a)}^{2-a}  \dif b\;  e^{2\ii t b} \frac{ 2b} { (1-U)^2 +\frac{4U^2b^2}{4-a^2} }.
 \end{split}
 \ee
In the boundary terms we can perform one more integration by parts in the $a$ variable
when plugged into~\eqref{rest}. Just focusing on the
first boundary term in~\eqref{intpp}, using $ |U|\le 1$ we have
$$
 \Bigg| \frac{1}{2\ii t}  e^{4\ii t}  \int_0^2 \dif a \, e^{-2\ii t a}\log \big[ (1-U)^2 +\frac{4U^2(2-a)}{2+a}\big] \Bigg|
 \lesssim \frac{1}{t^2} \int_0^2  \frac{\dif a}{ (1-U)^2 + U^2(2-a)} \lesssim \frac{|\log (1-U)|}{t^2}.
$$
Since $\rho^*(U)$ is a density bounded  by $(1-U^2)^{-1/2}$ in the $U\approx 1$ regime from~\eqref{rhou},  the
logarithmic singularity is integrable showing that the two boundary terms in \eqref{intpp}, when  plugged into~\eqref{rest},  give at most
an $O(1)$ contribution, negligible compared with the target behavior of order $\sqrt{t}$ in~\eqref{tar1}.

To compute the main (second) term in the rhs. of~\eqref{intpp}, we first extend the integration limits to infinity and claim that
\be\label{onemore}
\begin{split}
 t^2  \int \dif U \rho^*(U) \Big|  \frac{1}{2\ii t} &\int_0^2\dif a \frac{4U^2}{4-a^2} \int_{2-a}^{\infty}  \dif b\;  e^{2\ii t b} \frac{ 2b} { (1-U)^2 +\frac{4U^2b^2}{4-a^2} }\Big|\\
& \lesssim   t \int \dif U \rho^*(U)\int_0^2\frac{\dif a}{2-a}\Bigg|
 \int_{2-a}^{\infty}  \dif b\;  e^{2\ii t b} \frac{ 2b} { (1-U)^2 +\frac{4U^2b^2}{4-a^2} }\Bigg|
 \end{split}
\ee
gives a negligible contribution to~\eqref{rest} (the lower limit is removed similarly).
Indeed, 
 we apply one more integration by parts inside the absolute value in~\eqref{onemore}:
 $$
\Bigg|
 \int_{2-a}^{\infty}  \dif b\;  e^{2\ii t b} \frac{ 2b} { (1-U)^2 +\frac{4U^2b^2}{4-a^2} }\Bigg|
 \lesssim t^{-1} \int_{2-a}^{\infty} \frac{\dif b}{  (1-U)^2 +\frac{U^2b^2}{4-a^2} } + t^{-1} \frac{ 2-a } { (1-U)^2 +(2-a) }.
$$
Its contribution to  the rhs of~\eqref{onemore} is thus bounded by
$$
  \int \dif U \rho^*(U) \int_0^2\frac{\dif a}{2-a} 
  \Big[ \int_{2-a}^{\infty} \frac{\dif b}{  (1-U)^2 +\frac{U^2b^2}{4-a^2} } + \frac{ 2-a } { (1-U)^2 +(2-a) }\Big]
$$
$$
 \lesssim  \int \frac{\dif U}{\sqrt{1-U^2}}\Bigg[ \int_0^2
 \frac{\dif a}{\sqrt{2-a} }\frac{1}{1-U + \sqrt{2-a}}  + |\log(1-U)|\Bigg] \lesssim1.
 $$

Summarizing, we just proved that
\be\label{summ}
\begin{split}
\eqref{intapp2} & = -2t \Im \int \dif U \rho^*(U) \int_0^2 \dif a
\frac{4U^2}{4-a^2} \int_{-\infty}^\infty \dif b\;  e^{2\ii t b} \frac{ 2b} { (1-U)^2 +\frac{4U^2b^2}{4-a^2} }
+O(1)\\
 &= \frac{t}{\pi} \int \dif U \rho^*(U) \int_0^2 \dif a \; e^{-t\sqrt{4-a^2}(1-U)/U}
+O(1)\\
& = \frac{c_0t}{\pi}    \int \dif U \frac{1}{\sqrt{1-U}} \int_0^2 \dif a \; e^{-t\sqrt{4-a^2}(1-U)/U}  +O(1)\\
&  =  \frac{c_0\sqrt{t}}{\pi}  \int_0^\infty \frac{e^{-v}}{\sqrt{v}} \dif v \int_0^2 \frac{\dif a}{(4-a^2)^{1/4}}
    +O(1) \\
 &=  \frac{\Gamma(3/4)}{\sqrt{2}\Gamma(5/4)}c_0 \sqrt{t} + O(1),
 \end{split}
 \ee
where in the second line we used residue calculation, in the third line we used that
$$
   \rho^*(U) = \frac{c_0}{\sqrt{1-U}} +O(1)
$$
in the regime $U\approx 1$ with some positive constant $c_0>0$ depending on the distribution of $s$
(see~\eqref{1-U}),
and finally in the fourth line we used that  for large $t$ 
the main contribution to the integral comes from   $U\approx 1$
in order to simplify the integrand.  This completes the proof of~\eqref{intapp2}.

\bigskip

{\bf Part II.} We now prove~\eqref{intapp}.  After changing to the $a, b$ variables and considering only the
$0\le a\le 2$ regime for definiteness, we perform 
an integration by parts in $b$ that gives 
\be\label{intapp4}
\begin{split}
  \eqref{intapp} \lesssim & \;  t \int \dif U \rho^*(U) \Bigg| \int_{0}^2 \dif a \, e^{2\ii ta} \Big[ 
 \log M(U, a,b) -\log F\big(U, a, b\big)
 \Big]  \dif b\Bigg|  \\
&+   t \int \dif U \rho^*(U) \int_{0}^2 \dif a \Bigg| \int_{-(2-a)}^{2-a} e^{2\ii tb} \partial_b\Big[ 
 \log M(U, a,b) -\log F\big(U, a, b\big)
 \Big]  \dif b\Bigg| 
\end{split}
\ee
recalling the definition of $M$ from~\eqref{Mdef}.
 The first term  in~\eqref{intapp4}
is the boundary term, which is negligible  after one more integration by parts using the $\partial_a$ 
derivative estimate from~\eqref{1der}.
 
In the second term we perform one more integration by parts to obtain
 \be\label{intapp5}
 \begin{split}
  \eqref{intapp} \lesssim   & \; t \int \dif U \rho^*(U) \Bigg| \int_{0}^2 \dif a \, e^{2\ii ta} \partial_b\Big[ 
 \log M(U, a,b) -\log F\big(U, a, b\big)
 \Big]  \dif b\Bigg|  \\
 &+\int \dif U \rho^*(U) \int_{0}^2 \dif a \int_{-(2-a)}^{2-a}\Bigg| \partial_b^2\Big[ 
 \log M \big(U, a, b\big)-\log F\big(U, a, b\big)
 \Big] \Bigg| \dif b ,
\end{split}
\ee
where the first term comes from the boundary.
In this  term we can perform one more integration by parts in $a$. The corresponding boundary
terms are easily seen to be order one and the main term is analogous to the first term in the rhs of~\eqref{intapp5}
just we have the mixed $\partial_a\partial_b$ derivative.
 Recalling $\Delta= M-F$  from~\eqref{deltadef}, we use the estimate
 $$
  \big| \partial_b^2 [\log M- \log F]\big| \lesssim \frac{|\partial_b^2 \Delta| }{F} +  \frac{|\partial_b^2 F|}{F^2} |\Delta|+
    \frac{|\partial_b \Delta||\partial_b M +\partial_b F| }{F^2} + (\partial_bF)^2\frac{|\Delta|}{F^3}
$$
in the situation where 
  $M\gtrsim F>0$ are positive functions  (see~\eqref{appbound}). Similar bound holds for the mixed derivative.
  
  Therefore,    we can estimate 
  both  integrals in~\eqref{intapp5} as follows:
  \be\label{intapp6}
  \begin{split}
  \eqref{intapp} \lesssim & \int \dif U \rho^*(U) \int_{0}^2 \dif a \int_{-(2-a)}^{2-a}
   \frac{1}{(4-a^2)^{3/2}}  \frac{1}{\big[ (1-U)^2 + \frac{b^2}{4-a^2}\big]^{1/2}}  
  \dif b 
 \\
 \lesssim  & \int\frac{\dif U}{\sqrt{1-U^2}} \int_{0}^2 \frac{\dif a}{4-a^2} \int_0^{\sqrt{2-a}} 
    \frac{\dif u}{ \big[  (1-U)^2 + u^2\big]^{1/2}}\\
  \lesssim & \int\frac{\dif U}{\sqrt{1-U^2}}  \int_0^{\sqrt{2}} 
    \frac{ |\log u|+1 }{ \big[  (1-U)^2 + u^2\big]^{1/2}} \dif u\\
 \lesssim &\int\frac{\dif U |\log(1-U)|^2}{(1-U)^{1/2}} \lesssim 1.
 \end{split}
 \ee
 Here we used the bounds~\eqref{appapp}, \eqref{1der} and \eqref{2der} and that $|b|\le 2-a\lesssim 4-a^2$
to simplify some estimates. For computing the derivatives of $F$ we used its explicit form~\eqref{defF}.
This completes the proof of~\eqref{intapp} and thus also the proof of~\eqref{R0} in Lemma~\ref{Rlemma}.

The proof of~\eqref{R1} is very similar.  We again approximate $M=|1-Um(x)\ov{m(y)}|^2$ by
$F$ at the expense of negligible errors. We omit these calculations as they are very similar
those for~\eqref{R0} and focus only on the main term 
which is (see the analogous~\eqref{rest})
\be\label{sq}
16t^4  \int \dif U \rho^*(U) \Big[ \Re \int_0^2 \dif a  \int_{-(2-a)}^{2-a}   \dif b \; e^{2\ii tb}  \log F(U, a, b)\Big]^2.
 \ee
After one  integration by parts and neglecting the lower order boundary terms, we have the following analogue of~\eqref{summ}:
\be\label{sq2}
\begin{split}
4t^2  \int \dif U \rho^*(U) &\Big[ \Re \int_0^2 \dif a
\frac{U^2}{4-a^2} \int_{-\infty}^\infty \dif b\;  e^{2\ii t b} \frac{ 2b} { (1-U)^2 +\frac{4U^2b^2}{4-a^2} }
\Big]^2\\
 &= \frac{t^2}{\pi^2} \int \dif U \rho^*(U) \Big[ \int_0^2 \dif a \; e^{-t\sqrt{4-a^2}(1-U)/U}\Big]^2 \\
 & \approx\frac{c_0 t^{3/2}}{\pi^2}  \int_0^\infty \frac{\dif v}{\sqrt{v}} 
 \Big( \int_0^2 \dif a \; e^{-\sqrt{4-a^2}v } \Big)^2 \\
 &=  \frac{c_0 t^{3/2}}{\pi^2} \iint_0^2 \frac{\dif a_1\dif a_2}{(\sqrt{4-a_1^2}
 +\sqrt{4-a_2^2})^{1/2}}
    \int_0^\infty \frac{e^{-v}}{\sqrt{v}} \dif v \sim t^{3/2}
 \end{split}
 \ee
as the leading term. This proves~\eqref{R1} and completes the proof of Lemma~\ref{Rlemma}.
\end{proof}

 We close this section by commenting on the proof of the upper bound in~\eqref{tildeS}. Recall from~\eqref{ES} that  
 the essential part of $\wt S_\mathrm{res}(t)$  in the slope regime is given by $\E_s \E_r \wt v^{sr}(t)$
 expressed by the oscillatory integrals
 \be\label{osc}
 R_\pm(t):=t^2  \iint_{\R^2} \rho(s)\rho(r)\dif s \dif r \iint_{-2}^2 \dif x\dif y e^{\ii t (\|s\| x\pm \| r\|y)} 
A(U, x, y)
 \ee
 with 
 $$  A(U, x, y): =  \log \abs*{1-U m(x)\ov{m(y})} - \log \abs*{1-Um(x)m(y)},
 $$
 where $U= \frac{\langle s, r\rangle}{\| s\|\|r \|}$ is the cosine of the angle between the vectors $s,r\in \R^2$.
 Assuming for the moment that $\rho$, the density of $s$, is rotationally symmetric, 
 $\rho(s)= \rho(\| s\|)$ with a slight abuse of notations, we have
\be\label{EEv}
\begin{split}
 R_\pm(t) \sim & \; t^2 \int_{-1}^1 \frac{\dif U}{\sqrt{1-U^2}} \iint_{-2}^2 \dif x\dif y A(U, x, y)
 \int_0^\infty  e^{\ii t x \sigma}\rho(\sigma)\sigma \dif\sigma
 \int_0^\infty e^{\pm\ii t y \sigma'}\rho(\sigma')\sigma' \dif\sigma' \\
 \sim & \; t \int_{-1}^1 \frac{\dif U}{\sqrt{1-U^2}} \iint_{-2}^2 \dif x\dif y \wh\rho(tx) \wh{\rho(\sigma)\sigma}(\pm ty)
    \frac{\dif }{\dif x}    A(U, x, y)\\
\end{split}
\ee
performing an integration by parts in $x$ and ignoring lower order boundary term.
 In the last step
we also computed the Fourier transform (we used that  $\rho(0)=0$ to extend $\rho$ to $\R$). The main contribution comes 
from the regime where $A$ is nearly singular, and considering~\eqref{lowerb}, we just focus on the 
regime $U\sim1$  and $x\sim y$, the singularity from the other logarithmic term is treated analogously.
Similarly to the proof of~\eqref{intapp} we may ignore the edge regime, and effectively we have 
\be\label{dA}
 \Big| \frac{\dif }{\dif x}   \log \abs*{1-U m(x)\ov{m(y})} \Big| \lesssim \frac{1}{(1-U)+ |x-y|}.
\ee
Thus we can continue estimating the last line of~\eqref{EEv}
$$
  |\eqref{EEv}|\lesssim
  t \int_{-1}^1 \frac{\dif U}{\sqrt{1-U^2}} \iint_{-2}^2 \dif x\dif y \frac{\big| \wh\rho(tx) \wh{\rho(\sigma)\sigma}(ty)\big| }{ (1-U)+ |x-y|} \lesssim t^{-1/2}.
   $$
  Here we used  the regularity of  $\rho$, so that the last two factors essentially restrict the integration to the regime $|x|, |y|\lesssim 1/t$.
The final inequality is obtained just by scaling.    

To understand $\wt S_\mathrm{res}(t)$ in the ramp regime, we need to  compute $\E_s \E_r \wt v^{sr}_\pm(t)^2$, i.e.\ integrals
 of the following type:
\be\label{fin}
\begin{split}
t^4  \iint_{\R^2} & \rho(s)\rho(r)\dif s \dif r \Bigg| \iint_{-2}^2 \dif x\dif y e^{\ii t (\|s\| x\pm \| r\|y)} A(U, x, y)
\Bigg|^2\\
  = &\; t^4 \int_{-1}^1 \frac{\dif U}{\sqrt{1-U^2}}  \iint_{-2}^2 \dif x\dif y \iint_{-2}^2  \dif x'\dif y'  
   A(U, x, y)\ov{A(U, x', y')} \\
   &\times \int_0^\infty  e^{\ii t (x-x') \sigma}\rho(\sigma)\sigma \dif\sigma
 \int_0^\infty e^{\pm\ii t (y-y') \sigma'}\rho(\sigma')\sigma' \dif\sigma'\\
 \sim &\; t^2 \int_{-1}^1 \frac{\dif U}{\sqrt{1-U^2}}  \iint_{-2}^2 \dif x\dif y \iint_{-2}^2  \dif x'\dif y'  
   \frac{\dif }{\dif x}  A(U, x, y) \frac{\dif }{\dif y'} \ov{A(U, x', y')} \\
   &\times \int_0^\infty  e^{\ii t (x-x') \sigma}\rho(\sigma)\dif\sigma
 \int_0^\infty e^{\pm\ii t (y-y') \sigma'}\rho(\sigma') \dif\sigma'  \\ %
 \sim &\;  t^2 \int_{-1}^1 \frac{\dif U}{\sqrt{1-U^2}} \iint\!\!\!\iint_{-2}^2 \dif x\dif y    \dif x'\dif y'  \wh \rho(t(x-x'))\wh \rho(\pm t(y-y'))
   \frac{\dif }{\dif x}  A(U, x, y) \frac{\dif }{\dif y'} \ov{A(U, x', y')}.
 \end{split}
\ee
Here we performed two integrations by parts in $x$ and $y'$ and ignored the boundary terms. 
Estimating the derivative of $A$ as in~\eqref{dA}, we can continue 
$$
  |\eqref{fin}|\lesssim
  t^2 \int_{-1}^1 \frac{\dif U}{\sqrt{1-U^2}}  \iint_{-2}^2 \frac{\dif x\dif y }{(1-U)+ |x-y|}   \iint_{-2}^2 \frac{\dif x'\dif y' }{(1-U)+ |x'-y'|}
   \big|\wh \rho(t(x-x'))\wh \rho(t(y-y'))\big|.
   $$
The last two factors essentially restrict the integration to the regime $|x-x'|\lesssim 1/t$,
$|y-y'|\lesssim 1/t$ and by scaling we obtain a bound of order $t^{1/2}$ for $ |\eqref{fin}|$.
This completes the sketch of the proof of~\eqref{tildeS}
in the radially symmetric case, the general case is analogous but technically more cumbersome and we omit the 
details.

\emph{Data availability statement.} All data generated or analysed are included in this published article.

\end{document}